    \theoremstyle{definition}
        \newtheorem{corollary}{Corollary}
        \newtheorem{definition}{Definition}
        \newtheorem{lemma}{Lemma}
        \newtheorem{property}{Property}
        \newtheorem{theorem}{Theorem}
    \theoremstyle{remark}
        \newtheorem{remark}{Remark}
    \DeclareMathOperator{\girth}{girth}
    \DeclareMathOperator{\rref}{rref}
    \DeclarePairedDelimiter\abs{\lvert}{\rvert}
    \DeclarePairedDelimiter\range{\llbracket}{\rrbracket}
    \newsavebox{\imagebox}  
\begin{document}
    \pdfstringdefDisableCommands{\def\\{ }}
    \title[Topology-Based Reconstruction Prevention for Decentralised Learning]{Topology-Based Reconstruction Prevention for\\Decentralised Learning}

    \author{Florine W.\ Dekker}
    \orcid{0000-0002-0506-7365}
    \affiliation{%
        \institution{Delft University of Technology}
        \city{Delft}
        \country{Netherlands}}
    \email{f.w.dekker@tudelft.nl}

    \author{Zekeriya Erkin}
    \orcid{0000-0001-8932-4703}
    \affiliation{%
        \institution{Delft University of Technology}
        \city{Delft}
        \country{Netherlands}}
    \email{z.erkin@tudelft.nl}

    \author{Mauro Conti}
    \orcid{0000-0002-3612-1934}
    \affiliation{%
        \institution{Università di Padova}
        \city{Padua}
        \country{Italy}}
    \affiliation{%
        \institution{Delft University of Technology}
        \city{Delft}
        \country{Netherlands}}
    \email{mauro.conti@unipd.it}

    \begin{abstract}
    Decentralised learning has recently gained traction as an alternative to federated learning in which both data and
    coordination are distributed over its users.
    To preserve the confidentiality of users' data, decentralised learning relies on differential privacy, multi-party
    computation, or a combination thereof.
    However, running multiple privacy-preserving summations in sequence may allow adversaries to perform reconstruction
    attacks.
    Unfortunately, current reconstruction countermeasures either cannot trivially be adapted to the distributed setting,
    or add excessive amounts of noise.

    In this work, we first show that passive honest-but-curious adversaries can infer other users' private data after
    several privacy-preserving summations.
    For example, in subgraphs with 18~users, we show that only three passive honest-but-curious adversaries succeed at
    reconstructing private data 11.0\% of the time, requiring an average of 8.8~summations per adversary.
    The success rate depends only on the adversaries' direct neighbourhood, and is independent of the size of the full
    network.
    We consider weak adversaries that do not control the graph topology, cannot exploit the inner workings of the
    summation protocol, and do not have auxiliary knowledge;
    and show that these adversaries can still infer private data.

    We develop a mathematical understanding of how reconstruction relates to topology and propose the first
    topology-based decentralised defence against reconstruction attacks.
    Specifically, we show that reconstruction requires a number of adversaries linear in the length of the network's
    shortest cycle.
    Consequently, exact reconstruction attacks over privacy-preserving summations are impossible in acyclic networks.

    Our work is a stepping stone for a formal theory of topology-based decentralised reconstruction defences.
    Such a theory would generalise our countermeasure beyond summation, define confidentiality in terms of entropy, and
    describe the interactions with (topology-aware) differential privacy.
\end{abstract}

    \keywords{reconstruction attacks, statistical disclosure, decentralised learning, privacy-preserving summation, graph girth}
    \maketitle

    \section{Introduction}\label{sec:introduction}
Machine learning is used in a wide array of systems, including smartwatches~\cite{Weiss2016},
predictive text~\cite{Bonawitz2017}, and malware detection~\cite{Rieck2011}.
These systems require access to large amounts of reliable data in order to function accurately.
In practice, the necessary data usually exist, but are distributed over many data owners.
The naive approach for data collection is to have the data owners send their data to a central server, which trains a
machine learning model on these data before deploying it.
However, sharing private data may result in misuse, for example in the form of targeted advertising or harassment.
In an age of increasing privacy awareness, data owners may be reluctant to share their data, threatening the viability
of data-intensive machine learning applications.

The emerging field of federated learning, first formalised in~\cite{McMahan2017}, addresses these privacy issues by
distributing the training process over the data owners.
Instead of submitting their data, each data owner first trains a machine learning model on their local data and then
submits this model to a central server.
This central server, called the aggregator, uses a privacy-preserving summation protocol to combine the received models
into a single global model.
The central server then sends back the global model to the data owners, who apply another round of training, repeating
the entire process until the global model has converged.

A significant drawback of classical federated learning is that communication is a bottleneck, scaling
quadratically~\cite{Bonawitz2017} or poly-logarithmically~\cite{Bell2020} in the number of users.
Decentralised learning, a variant of federated learning~\cite{Kairouz2019}, removes this bottleneck by distributing both
the data and the coordination between users.
Training happens in a peer-to-peer fashion, with users exchanging information only with their direct neighbours.
This significantly reduces the communication complexity~\cite{Lian2017}, allowing for cost-effective deployments without
a central server.
Furthermore, because communication is local, it becomes much harder for adversaries to observe the full
network~\cite{Troncoso2017}.

Recently, there has been increased interest in decentralised learning.
Though some works do not consider privacy~\cite{Lian2017, Tang2018, Zantedeschi2020}, many other works do.
Some of these works~\cite{Vanhaesebrouck2017, Bellet2018, Zantedeschi2020} consider algorithms in which nodes are
randomly selected to calculate updates, and protect the private data underlying the models using differential privacy.
That is, they apply carefully calibrated random noise to the calculated gradients before sharing them with others.
A slight variation of this is to use a random walk through the graph to determine the order in which updates
occur~\cite{Cyffers2022}.
There are also works~\cite{Chen2018, Qu2020, Schmid2020} that use blockchains to facilitate the communication and
coordination between nodes, and then similarly use differential privacy.
Finally, instead of differential privacy, some works utilise multi-party
computation~\cite{Danner2018, Kanagavelu2020, Tran2021}, which does not give noisy results, but has higher
computational costs.

A common thread in these works is that they apparently assume that if a single summation is secure, then the protocol
remains secure after multiple summations.
However, this requires further scrutiny, as combining information from multiple rounds may reveal previously hidden
information.
For example, given private records~$A$, $B$, and~$C$, and a privacy-preserving summation protocol, an adversary could
separately query~$A + B$, then~$B + C$, and finally~$A + C$, and use a linear algebra solver to learn all three private
records.
To defend against such attacks, one must prevent sequences of queries that would reveal private data.
Naive restrictions, such as requiring a minimum number of included records per query, are insufficient:
The adversary could still first query the sum of all models and then query the sum of all models except one, allowing
them to reconstruct the excluded model.
As such, designing proper countermeasures requires a formal theory.

Extracting data from output traces is known as a reconstruction attack, which has its roots in the theory of statistical
disclosure~\cite{Fellegi1972}.
Many defences have been proposed since the 1970s, including query auditing~\cite{Chin1982},
perturbation~\cite{Dwork2006a}, and random sampling~\cite{Denning1980}.
However, these works assume either a central database, or otherwise assume a central arbiter that determines which
queries are allowed.
In decentralised learning, there is no clear leader who can be trusted to audit queries.
Instead, decentralised learning requires a decentralised solution.
Apart from works on perturbation, to the best of our knowledge, only \citeauthor{DaSilva2004}~\cite{DaSilva2004} have
considered reconstruction attacks in peer-to-peer networks, but their work applies only to distributed clustering, and
does not propose any countermeasures.
When considering perturbation, naively applying user-level differential privacy in a distributed setting results in
linearly-scaling noise, severely reducing the protocol's utility~\cite{Dwork2014, Zheng2017, Cyffers2022}.
Intuitively, utility can be increased while retaining the level of privacy by correlating noise by
topology~\cite{Dwork2006a}, but to the best of our knowledge only a few works have done this.
\citeauthor{Guo2022}~\cite{Guo2022} reduce noise based on the mutual overlaps of neighbours' neighbourhoods, but do not
consider time-series correlations.
\citeauthor{Cyffers2022}~\cite{Cyffers2022} observe that data sensitivity decreases as mutual node distances increase,
but their solution does not scale well under collusion.

In this work, we analyse reconstruction attacks performed by colluding adversaries in peer-to-peer networks.
We model the network after decentralised learning, though our analysis is sufficiently generic to describe a sequence of
summations in any environment.
Summation is a simple protocol, but is sufficient to implement many of the aforementioned decentralised learning
protocols, in addition to smart metering~\cite{Garcia2011} and even principal component analysis, singular-value
decomposition, and decision tree classifications~\cite{Blum2005}.
We assume a set of nodes, each with a private datum that changes over time, and allow privacy-preserving summation over
one's direct neighbours.
We do not consider auxiliary knowledge;
see \autoref{subsubsec:preliminaries:assumptions-and-notation:adversarial-model} and \cite{Cormode2013, Clifton2013} for
a detailed discussion on the real-world applicability of this model.
We then formalise the relation between reconstruction and network topology, and prove that exact reconstruction attacks
are impossible in a specific class of topologies.

Concretely, we begin by showing that reconstruction attacks are practical, and that, in random peer-to-peer subgraphs,
three honest-but-curious adversaries with 15~neighbours succeed in finding at least one neighbours' private datum with
an 11.0\% success rate, requiring an average of only 8.8~rounds per adversary.
The success rate is independent of the size of the full network;
it depends only on the adversaries' local neighbourhood.
We then show that the success rate depends on the connectivity of the network rather than its size.
Specifically, we show that reconstruction corresponds to cycles in the graph:
If the graph's shortest cycle has length~$2k$, then reconstruction never succeeds if there are fewer than
$k$~adversaries.
Finally, we briefly evaluate the impact of increasing girth on the convergence of a distributed averaging protocol, and
find that while more rounds are required to achieve convergence in all graphs, dense graphs require fewer rounds than
sparse graphs do when both are \enquote{stretched} to higher girths.

To the best of our knowledge, our work is the first to propose a topology-based decentralised defence to reconstruction
attacks.
We show that restricting how summations may be composed makes it impossible to reconstruct private data.
We must therefore assume that adversaries do not have auxiliary knowledge, as restrictions on summations cannot be
guaranteed otherwise.
With the ultimate goal of developing a general theory of structured composition as a distributed reconstruction
countermeasure, future work may include finding a condition that is not only sufficient (as seen in this work) but also
necessary for reconstruction, generalising these countermeasures to operations beyond summation, stronger notions of
privacy rooted in information theory, and investigating the interactions with (topology-aware) differentially private
noise.

The remainder of this paper is structured as follows.
In \autoref{sec:related-work}, we discuss related work.
In \autoref{sec:preliminaries}, we describe the preliminaries:
We explain basic primitives, formalise our assumptions, and introduce our notation.
In \autoref{sec:reconstruction-in-multi-party-summation}, we formally describe reconstruction attacks, and show that the
attack is feasible.
In \autoref{sec:resistance-by-girth}, we prove that the success rate of the reconstruction attack depends on the graph's
girth, and investigate how girth affects application performance.
Finally, in \autoref{sec:conclusion}, we present our conclusions.

    \section{Related Work}\label{sec:related-work}
In this work we propose a decentralised reconstruction countermeasure for privacy-preserving summation with dynamic
data.
To the best of our knowledge, this exact problem has not been treated in literature before.
Therefore, in this section, we consider related works from various fields, and describe their similarities and
differences.

\subsection{Reconstruction Attacks}\label{subsec:related-work:reconstruction-attack}
Consider a database that users can query for statistical information.
For example, in a database with employee records, users can query for the sum of salaries of all PhD~students.
Naturally, the database must ensure that users cannot learn individual employees' salaries.
A naive defence would disallow queries over single records, but a cleverly chosen sequence of queries may still allow
the user to reconstruct private data.
For example, the user could query the sum of salaries of all employees, and the sum of salaries of all employees except
Jay~Doe, and reconstruct Jay~Doe's salary from that.

The attack described above is known under various names:
\textit{statistical disclosure}%
\footnote{Confusingly, the term \enquote{statistical disclosure attack} is also a separate attack in peer-to-peer
literature~\cite{Danezis2003}, but this is an unrelated attack on anonymity rather than confidentiality.},
the \textit{inference problem}, and the \textit{reconstruction attack}.
It has been the subject of research since at least the 1970s~\cite{Fellegi1972}, originally in the context of releasing
census statistics.
Since then, many reconstruction defences have been proposed, including random sampling~\cite{Denning1980}, query
auditing~\cite{Chin1982}, and perturbation~\cite{Dwork2006a}.

Most related to our research question are those works that consider sum queries only.
\citeauthor{Chin1978}~\cite{Chin1978} studies summation query graphs to determine the exact conditions
under which disclosure occurs.
However, his analysis is limited to queries that are over exactly two records each, and cannot easily be generalised.
\citeauthor{Wang2002}~\cite{Wang2002} allow queries over more than two records.
The authors propose cardinality-based criteria for determining whether reconstruction is possible, and create a
whitelist of all summations that can be performed without causing undesired reconstruction.

All aforementioned solutions consider a single trusted database or auditor, making them unsuitable for peer-to-peer
protocols, in which the data are spread over many users.
Except for perturbation-based techniques, there are very few works that consider reconstruction defences in peer-to-peer
settings.
In their study on reconstruction attacks in distributed environments, \citeauthor{Jebali2020}~\cite{Jebali2020} note
only the work by \citeauthor{DaSilva2004}~\cite{DaSilva2004} when discussing peer-to-peer solutions, but the latter
applies only to distributed clustering, and does not propose any countermeasures.

Perturbation, on the other hand, has been studied in more detail.
Probably the most popular perturbation mechanism for the decentralised setting is local differential
privacy~\cite{Warner1965, Evfimievski2003, Kasiviswanathan2008}, a variation of differential privacy~\cite{Dwork2006a}.
With this technique, when a query is performed over some set of nodes, each node adds a small amount of noise such that
the aggregate is relatively accurate, but reconstruction remains impossible even after multiple queries.
Various fully-decentralised learning protocols use local differential privacy to allow learning a shared machine
learning model without revealing users' private datasets~\cite{Vanhaesebrouck2017, Bellet2018, Zantedeschi2020}.
However, the perturbation is calibrated to protect individual records in users' private datasets, rather than protecting
users' entire datasets.
As a result, these works are potentially vulnerable to inversion attacks~\cite{Hitaj2017, Wang2019}.
The level of noise can be increased, but this severely impacts utility~\cite{Zheng2017, Cyffers2022}.
Intuitively, noise can be made more \enquote{efficient} by exploiting correlations between users' data~\cite{Dwork2014},
which, in peer-to-peer networks, amounts to calibrating noise to the topology.
To the best of our knowledge only a few works have done this.
\citeauthor{Guo2022}~\cite{Guo2022} reduce noise based on the mutual overlaps of neighbours' neighbourhoods, but do not
consider time-series correlations.
\citeauthor{Cyffers2022}~\cite{Cyffers2022} observe that data sensitivity decreases as mutual node distances increase,
but their solution does not scale well when adversaries collude.

\subsection{Multi-Party Computation}\label{subsec:related-work:multi-party-computation}
In secure multi-party computation, composability~\cite{Lindell2003} is the property of a cryptographic scheme that no
additional leakage occurs when it is invoked multiple times, with varying parties, combined with other schemes, and so
on.
There are numerous frameworks to model composability, including universal composability~\cite{Canetti2001}, constructive
composability~\cite{Maurer2012}, and reactive simulatability~\cite{Backes2007}.

Composability solves a different issue than the one posed in this work.
While composability ensures nothing leaks beyond what can be inferred from the outputs, our work is concerned exactly
with that which can be inferred from the outputs.
Composability does not help when the desired output (implicitly) reveals private data.

In secure multi-party computation literature, this difference is occasionally acknowledged.
For example, \citeauthor{Bogdanov2014}~\cite{Bogdanov2014} note that \enquote{the composition of ideal functionalities
is no longer an ideal functionality}, and, before them, \citeauthor{Yang2010}~\cite{Yang2010} made a similar
observation.
There are more works that consider this difference, but, to the best of our knowledge, these works all resolve the issue
by removing or protecting intermediate values, but do not consider protocols which desire intermediate values, and even
then do not consider that reconstruction attacks may be possible after multiple instantiations of the protocol.
An exception is the work by \citeauthor{Dekker2021}~\cite{Dekker2021}, which releases intermediate values in a
structured manner such that it is not possible to reconstruct all users' values.
However, the authors do not prove (or disprove) that it is impossible to find a \textit{single} user's value.

    \section{Preliminaries}\label{sec:preliminaries}
We briefly explain some basics on privacy-preserving summation in
\autoref{subsec:preliminaries:privacy-preserving-summation} and on bipartite graphs in
\autoref{subsec:preliminaries:bipartite-graphs}.
After that, we formulate our assumptions and define our notation in
\autoref{subsec:preliminaries:assumptions-and-notation}.

\subsection{Privacy-Preserving Summation}\label{subsec:preliminaries:privacy-preserving-summation}
Privacy-preserving summation is a special case of multi-party computation in which an aggregator calculates the sum of
users' private values without learning the users' individual values.
In this work, we consider privacy-preserving summation to be an information-theoretically secure black-box that reveals
only the identities and the sum of the variables.

\subsection{Bipartite Graphs}\label{subsec:preliminaries:bipartite-graphs}
A bipartite graph~$H = (U, V, E)$ is a graph with nodes $U \cup V$ and edges~$E$, subject to
$U \cap V = \emptyset$ and $\forall (u, v) \in E : u \in U \Leftrightarrow v \in V$.

Furthermore, a bipartite graph~$H = (U, V, E)$ can be described by a biadjacency
matrix~$A \in \{0, 1\}^{\abs{U} \times \abs{V}}$, where
$\forall 0 \leq u < \abs{U}, 0 \leq v < \abs{V} : A_{u, v} = 1 \Leftrightarrow (U_u, V_v) \in E$.

In this work, all graphs are undirected.

\subsection{Assumptions and Notation}\label{subsec:preliminaries:assumptions-and-notation}
The underlying models and assumptions in this work are based on those seen in the decentralised learning
literature~\cite{Danner2018, Bellet2018, Zantedeschi2020}, but are especially close to the work by
\citeauthor{Vanhaesebrouck2017}~\cite{Vanhaesebrouck2017}.

In general, we denote the first element of a vector~$v$ by~$v_0$, the first row of a matrix~$A$ by~$A_0$, the range of
integers~$\{ 0 \ldots n - 1 \}$ by~$\range{n}$, and the number of elements in a collection~$S$ by~$\abs{S}$.

\subsubsection{User data and objectives}
\label{subsubsec:preliminaries:assumptions-and-notation:user-data-and-objectives}
Consider a system of $n$~users~$V$, each with a private datum.
Each datum is dynamic;
it changes each time the user initiates a round and incorporates new knowledge from their neighbours.
(We describe the time model in \autoref{subsubsec:preliminaries:assumptions-and-notation:time-model}.)
Each datum can be a vector of values, though for simplicity we assume scalar values in our notation.
Examples of dynamic data are power consumption, GPS~coordinates, and machine learning models.
In round~$t$, the data of user~$i \in \range{n}$ is denoted~$\theta_{i, t}$.

The users want to compute some function over their data without revealing their data to others.
Each user regularly runs a privacy-preserving summation protocol to find the sum of their direct neighbours' private
data.
This sum can be used for principal component analysis, singular-value decomposition, or distributed gradient descent,
for example.

\subsubsection{Network model}\label{subsubsec:preliminaries:assumptions-and-notation:network-model}
Users communicate with each other in a peer-to-peer network.
This can be a physical network, for example based on Bluetooth or Wi-Fi~Direct, or an overlay network, in which users
are connected through the Internet.
We model the network as an undirected, self-loopless, static graph~$G = (V, E)$ in which each node represents a user.
(We consider graphs with dynamic edges in \autoref{subsec:resistance-by-girth:dynamic-edges}.)
The direct neighbours of a node~$v \in V$ are denoted~$N_G(v)$, and for any set of users~$U \subseteq V$ we define their
shared neighbours~$N_G(U) \coloneq \bigcup_{u \in U} N_G(u) \setminus U$.
The network topology is not private;
in fact, users know who their direct neighbours are.
Users may run a privacy-preserving summation protocol to learn the sum of their direct neighbours' private values.

\subsubsection{Adversarial model}\label{subsubsec:preliminaries:assumptions-and-notation:adversarial-model}
We assume all $n$~users~$V$ are honest-but-curious.
That is, all users honestly follow the protocol, but may attempt to obtain other users' private data by operating on the
data obtained in the protocol in any way they see fit.
Additionally, $k$~users~$C \subseteq V$ may collude with each other, but we require that each adversary has either zero
or at least two non-adversary neighbours, as retrieving private data is trivial otherwise.
We give an example of a valid set of adversaries in
\autoref{fig:preliminaries:assumptions-and-notation:adversarial-model:example}.
Colluding users are still honest-but-curious, so their collusion is limited to sharing information outside the
protocol.
While excluding all actively malicious behaviour is a strenuous assumption in practice, we argue that the challenges in
the honest-but-curious model are already sufficiently interesting to warrant investigation.
We leave stronger notions of adversarial behaviour to future work;
see also \autoref{sec:conclusion}.

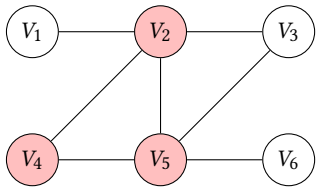
\begin{figure}[htb]
    \centering
    \begin{tikzpicture}
[node/.style={draw, circle}, adversary/.style={fill=pink}]
    \node[node,          ]              (V1) {$V_1$};
    \node[node, adversary, right=of V1] (V2) {$V_2$};
    \node[node,            right=of V2] (V3) {$V_3$};
    \node[node, adversary, below=of V1] (V4) {$V_4$};
    \node[node, adversary, right=of V4] (V5) {$V_5$};
    \node[node,            right=of V5] (V6) {$V_6$};

    \draw (V1) -- (V2);
    \draw (V2) -- (V3);
    \draw (V2) -- (V4);
    \draw (V2) -- (V5);
    \draw (V3) -- (V5);
    \draw (V4) -- (V5);
    \draw (V5) -- (V6);
\end{tikzpicture}
    \caption{
        A network with 6~users~$V$.
        The adversaries~$C = \{ V_2, V_4, V_5 \}$ are shaded.
        Removing edge $(V_2, V_3)$ would violate our requirements, as adversary~$V_2$ would have exactly one
        non-adversary neighbour.
    }
    \Description{
        An undirected graph with six nodes, numbered one through six.
        Nodes two, four, and five are adversaries, and nodes one, three, and six are non-adversaries.
        Node two, an adversary, is connected to adversary nodes four and five, and to non-adversary nodes one and three.
        Node four, an adversary, is connected to adversary nodes two and five, and is not connected to any non-adversary
        nodes.
        Node five, an adversary, is connected to adversary nodes two and four, and to non-adversary nodes three and six.
    }
    \label{fig:preliminaries:assumptions-and-notation:adversarial-model:example}
\end{figure}

Finally, we assume that adversaries do not possess auxiliary knowledge.
That is, we aim for syntactic privacy~\cite{Clifton2013}, of which the privacy guarantees do not compose trivially with
those of other protocols using the same private data.
Syntactic privacy is suitable when high utility is desired and participants have some level of mutual
trust~\cite{Cormode2013, Clifton2013}.
Moreover, prescribing a syntax on the data is inherent to this work's goal of establishing an interpretable relation
between privacy and topology.
We note that syntactic privacy does not preclude the use of semantic protections such as differential privacy, though
the investigation of that combination is out of scope for this work.
See \cite{Cormode2013, Clifton2013} for a detailed discussion of the subject.

\subsubsection{Time model}\label{subsubsec:preliminaries:assumptions-and-notation:time-model}
We work in the asynchronous time model~\cite{Boyd2006}, in which a global clock ticks whenever a user wakes up and
performs some work.
Equivalently, each user has their own clock ticking at the speed of a rate-1 Poisson process;
when a user's clock ticks, that user wakes up.
We denote the current global round number by~$t$ (for \enquote{time}).

    \section{Reconstruction in Multi-party Summation}\label{sec:reconstruction-in-multi-party-summation}
In this section we formally define reconstruction attacks in privacy-preserving multi-party dynamic-data summation, and
experimentally verify that this attack is feasible.
Adversaries passively record the summations they obtain throughout the protocol.
Because adversaries know which users are included in which summation, they obtain a system of linear equations.
Even if the system has no global solutions, adversaries may still learn the private data of some users.

In \autoref{subsec:reconstruction-in-multi-party-summation:reconstruction}, we informally explain reconstruction attacks
with examples.
In \autoref{subsec:reconstruction-in-multi-party-summation:obtained-adversarial-knowledge}, we give an exact definition
of the adversaries' knowledge.
In \autoref{subsec:reconstruction-in-multi-party-summation:the-attack}, we formally define reconstruction on multi-party
dynamic-data summation.
In \autoref{subsec:reconstruction-in-multi-party-summation:feasibility}, we experimentally verify the feasibility and
success rate of reconstruction attacks on random graphs.

\subsection{Introduction to Reconstruction Attacks}\label{subsec:reconstruction-in-multi-party-summation:reconstruction}
For this brief introduction, we use somewhat informal notation.
We formally define our notation in
\autoref{subsec:reconstruction-in-multi-party-summation:obtained-adversarial-knowledge}.

\paragraph{A small example}
Consider a graph~$G = (V, E)$ with users~$V$ and a set of $k$~adversaries $C \subseteq V$.
If a single adversary~$c \in C$ sums their neighbours' values, they learn a linear equation~$\Theta_c$ over the private
values~$\theta$ of neighbours~$N_G(c)$.
If multiple adversaries~$C$ collude, they share a \textit{system} of linear equations~$A\theta = \Theta$ over the
private values~$\theta$ of~$N_G(C)$.
If the system of linear equations has a solution, then the adversaries are able to calculate all observed users' private
values using linear combinations of the system's rows.
For example, given adversaries $A$, $B$, and $C$ with observations
\begin{equation}
    \label{eq:reconstruction-in-multi-party-summation:introduction:solvable-system-observations}
    \begin{array}{ccccccl}
        \theta_1 & + & \theta_2 &   &          & = & \Theta_A,             \\
        \theta_1 &   &          & + & \theta_3 & = & \Theta_B,\ \text{and} \\
        &   & \theta_2 & + & \theta_3 & = & \Theta_C,
    \end{array}
\end{equation}
this is equivalent to the system of linear equations
\begin{equation}
    \label{eq:reconstruction-in-multi-party-summation:introduction:solvable-system-matrix}
    \begin{bmatrix}
        1 & 1 & 0 \\
        1 & 0 & 1 \\
        0 & 1 & 1
    \end{bmatrix}
    \theta
    =
    \Theta.
\end{equation}
Since this system is full rank, adversaries can calculate
\begin{align}
    \label{eq:reconstruction-in-multi-party-summation:introduction:solvable-system-solution}
    \theta_1 &= \frac{\Theta_A + \Theta_B + \Theta_C}{2} - \Theta_C, \\
    \theta_2 &= \Theta_A - \theta_1,\ \text{and} \\
    \theta_3 &= \Theta_B - \theta_1.
\end{align}
For example, if $\Theta_A = 7$, $\Theta_B = 13$, and $\Theta_C = 8$, the adversaries know with certainty that
$\theta_1 = 6$, $\theta_2 = 1$, and $\theta_3 = 7$.
Observe that this works even if each individual summation in
\autoref{eq:reconstruction-in-multi-party-summation:introduction:solvable-system-observations} is
information-theoretically secure.

\paragraph{Partial solutions}
If the system is rank-deficient, no unique solution exists, but the system may still have partial solutions.
That is, even if a system has infinitely many possible solutions, it may be the case that some variables have the same
value in all solutions.
Even a single user's private value being leaked is a major issue for any privacy-preserving protocol.
Consider, for example, the adversarial knowledge consisting of
\begin{equation}
    \label{eq:reconstruction-in-multi-party-summation:introduction:rank-deficient-observations}
    \begin{array}{ccccccl}
        \theta_1 & + & \theta_2 & + & \theta_3 & = & \Theta_A\ \text{and} \\
        \theta_1 & + & \theta_2 &   &          & = & \Theta_B.
    \end{array}
\end{equation}
Even though there is no unique solution, all solutions have the same value for~$\theta_3$, calculated
as~$\theta_3 = \Theta_A - \Theta_B$.

The case of \autoref{eq:reconstruction-in-multi-party-summation:introduction:rank-deficient-observations} is trivial
because~$\Theta_B$ is the sum over a subset of~$\Theta_A$.
However, there are also rank-deficient systems in which no summation is a subset of another:
\begin{equation}
    \label{eq:reconstruction-in-multi-party-summation:introduction:rank-deficient-observations-non-trivial}
    \begin{array}{ccccccccl}
        \theta_1 & + & \theta_2 & + & \theta_3 &   &          & = & \Theta_A,             \\
        \theta_1 & + & \theta_2 &   &          & + & \theta_4 & = & \Theta_B,\ \text{and} \\
        &   &          &   & \theta_3 & + & \theta_4 & = & \Theta_C.
    \end{array}
\end{equation}
This system, too, has an infinite number of solutions, but each possible solution has the same values
\begin{align}
    \label{eq:reconstruction-in-multi-party-summation:introduction:solvable-system-solution-2}
    \theta_3 &= \frac{\Theta_A + \Theta_C - \Theta_B}{2}\ \text{and} \\[.6em]
    \theta_4 &= \frac{\Theta_B + \Theta_C - \Theta_A}{2}.
\end{align}

\paragraph{Time dimension}
The above examples do not take into account that users' data change over time.
To model dynamic data, first recall from
\autoref{subsubsec:preliminaries:assumptions-and-notation:user-data-and-objectives} that users update their values only
after initiating a summation.
Since each update requires an interactive summation, users implicitly inform their neighbours whenever they update;
and since each update represents the introduction of a new unknown value to~$\theta$, adversaries can represent an
update by adding a new column to their adversarial knowledge.
If a user updates their value multiple times before being observed by an adversary, the adversaries treat this as a
single update.

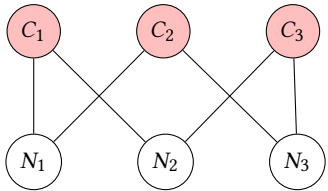
\begin{figure}[!ht]
    \centering
    \begin{tikzpicture}
[node/.style={draw, circle}, adversary/.style={fill=pink}]
    \node[node, adversary             ] (C1) {$C_1$};
    \node[node, adversary, right=of C1] (C2) {$C_2$};
    \node[node, adversary, right=of C2] (C3) {$C_3$};
    \node[node,            below=of C1] (N1) {$N_1$};
    \node[node,            right=of N1] (N2) {$N_2$};
    \node[node,            right=of N2] (N3) {$N_3$};

    \draw (C1) -- (N1);
    \draw (C1) -- (N2);
    \draw (C2) -- (N1);
    \draw (C2) -- (N3);
    \draw (C3) -- (N2);
    \draw (C3) -- (N3);
\end{tikzpicture}
    \caption{%
        Example graph~$G$ with adversaries~$C = \{ C_1, C_2, C_3 \}$ (shaded) and
        non-adversaries~$N = N_G(C) = \{ N_1, N_2, N_3 \}$.
    }
    \Description{
        The undirected two-two-biregular bipartite graph with six nodes.
        The nodes on one side have labels starting with the letter C, and the nodes on the other side have labels
        starting with the letter N.
        All six nodes have two edges each.
    }
    \label{fig:reconstruction-in-multi-party-summation:introduction:example}
\end{figure}

To give an example, consider adversaries~$C$ and their neighbours~$N_G(C)$ in
\autoref{fig:reconstruction-in-multi-party-summation:introduction:example}.
Say that initially adversaries~$C_1$ and~$C_2$ run their summations, learning
\begin{equation}
    \label{eq:reconstruction-in-multi-party-summation:introduction:updating-values-observations:1}
    \begin{bNiceArray}{c|c|c}[margin]
        1 & 1 & 0 \\
        1 & 0 & 1
    \end{bNiceArray}
    \theta
    =
    \Theta.
\end{equation}
The added vertical lines group the columns per non-adversarial user.
Next, say that user~$N_1$ updates their private value.
This is noticed by the adversaries, who insert a new column into their system of equations.
If user~$C_1$ then does another summation (which includes user~$N_1$'s new value), the adversaries know
\vspace{6mm}  
\begin{equation}
    \label{eq:reconstruction-in-multi-party-summation:introduction:updating-values-observations:2}
    \begin{bNiceArray}{cc|c|c}[margin]
        1 & 0 & 1 & 0 \\
        1 & 0 & 0 & 1 \\
        0 & 1 & 1 & 0 \\
        \CodeAfter
        \OverBrace[shorten, yshift=3pt]{1-1}{3-2}{N_1}
    \end{bNiceArray}
    \theta
    =
    \Theta.
\end{equation}
The last row represents adversary~$C_1$'s new summation, and the second column represents user~$N_1$'s new value.
Finally, if users~$N_1$, $N_2$, and~$C_1$ subsequently update (in that order), then users~$N_1$ and~$N_2$ each get a new
column, and~$C_1$'s update adds a new row, giving
\vspace{6mm}  
\begin{equation}
    \label{eq:reconstruction-in-multi-party-summation:introduction:updating-values-observations:3}
    \begin{bNiceArray}{ccc|cc|c}[margin]
        1 & 0 & 0 & 1 & 0 & 0 \\
        1 & 0 & 0 & 0 & 0 & 1 \\
        0 & 1 & 0 & 1 & 0 & 0 \\
        0 & 0 & 1 & 0 & 1 & 0 \\
        \CodeAfter
        \OverBrace[shorten, yshift=3pt]{1-1}{4-3}{N_1}
        \OverBrace[shorten, yshift=3pt]{1-4}{4-5}{N_2}
    \end{bNiceArray}
    \theta
    =
    \Theta.
\end{equation}

In the remainder of this work, to simplify notation, we will always assign the same number of columns~$t$ to each user.

\paragraph{Observations}
Before we give a formal definition of reconstruction attacks, we make two observations:
\begin{enumerate}
    \item
    Reconstruction does not rely on weaknesses in the summation algorithm;
    \textbf{reconstruction works even if summation is done by a trusted third party}.
    Instead, reconstruction relies only on the summation revealing both the identities of included variables and the sum
    of those variables.

    \item
    Reconstruction is independent of how users update their private values, and works even if users update their models
    in random ways or multiple times.
    \textbf{Reconstruction works because adversaries observe multiple summations with at least one unchanged value, and
    know how the summations are related.}
\end{enumerate}

\subsection{Obtained Adversarial Knowledge}
\label{subsec:reconstruction-in-multi-party-summation:obtained-adversarial-knowledge}
We give a formal description of adversarial knowledge, which is the system of linear equations that adversaries obtain
in a privacy-preserving multi-party dynamic-data summation protocol, and observe two important properties.

Let~$G = (V, E)$ be an undirected graph, let~$C \subseteq V$ be a collusion of $k$~adversaries,
let~$n \coloneq \abs{N_G(C)}$, and let~$t \in \mathbb{N}$ be the number of summations performed by~$C$.
\begin{definition}[Adversarial knowledge]
    \label{def:reconstruction-in-multi-party-summation:obtained-adversarial-knowledge:definition}
    The adversarial knowledge over $t$~summations by~$C$ is a consistent system of linear equations~$A\theta = \Theta$,
    subject to the conditions that
    \begin{itemize}
        \item
        $\theta \in \mathbb{R}^{nt \times 1}$~are the private values of neighbours~$N_G(C)$, such that
        $\theta_{\nu t + i}$ is the $i \in \range{t}$th~unique private value of neighbour~$\nu \in \range{n}$ that is
        observed by any adversary in~$C$,

        \item
        $\Theta \in \mathbb{R}^{t \times 1}$~are the sums obtained by the adversaries, where $\Theta_\tau$ is the
        $\tau \in \range{t}$th~ such sum, and

        \item
        $A \in \{ 0, 1 \}^{t \times nt}$~indicates which private values are observed in which summation, such that
        $A_{\tau, \nu t + i} = 1$ if and only if the adversaries' $\tau \in \range{t}$th~summation includes the
        $i \in \range{t}$th unique private value of neighbour~$\nu \in \range{n}$.
    \end{itemize}
\end{definition}
\begin{remark}
    In \autoref{thm:reconstruction-in-multi-party-summation:the-attack:relation-to-adversarial-values}, we will show
    that it is not necessary to include adversaries' own private values in~$A\theta = \Theta$.
\end{remark}
\begin{property}
    \label{pro:reconstruction-in-multi-party-summation:obtained-adversarial-knowledge:single-private-value-per-equation}
    Let~$A$ be the adversarial knowledge over $t$~summations by~$C$.
    In each equation, each neighbour in $N_G(C)$ contributes at most one private value:
    \begin{equation}
        \label{eq:reconstruction-in-multi-party-summation:obtained-adversarial-knowledge:single-private-value-per-equation}
        \forall \tau \in \range{t}, \nu \in \range{n} :
        \sum_{i \in \range{t}} A_{\tau, \nu t + i} \in \{ 0, 1 \}.
    \end{equation}
\end{property}
\begin{property}
    \label{pro:reconstruction-in-multi-party-summation:obtained-adversarial-knowledge:sum-iff-neighbour}
    Let~$A$ be the adversarial knowledge over $t$~summations by~$C$.
    Since each equation is over all the neighbours of an adversary in~$C$, each row in~$A$ corresponds exactly
    to~$N_G(c)$ for some~$c \in C$:
    \begin{multline}
        \label{eq:reconstruction-in-multi-party-summation:obtained-adversarial-knowledge:sum-iff-neighbour}
        \forall \tau \in \range{t} : \exists c \in C : \forall \nu \in \range{n} : \\
        \left(\sum_{i \in \range{t}} A_{\tau, \nu t + i} = 1\right) \Leftrightarrow (c, N_G(C)_\nu) \in E.
    \end{multline}
    As in
    \autoref{pro:reconstruction-in-multi-party-summation:obtained-adversarial-knowledge:single-private-value-per-equation},
    the summation merely describes whether neighbour~$\nu$ is included in the $\tau$th~linear equation.
\end{property}

\subsection{Reconstruction from Adversarial Knowledge}
\label{subsec:reconstruction-in-multi-party-summation:the-attack}
Finding a (partial) solution is not trivial.
It is well-known that the reduced row echelon form~($\rref$) of a system of linear equations reveals the system's
unique solution, if it has one.
Clearly, this unique solution is also at least a partial solution.
However, if there is no unique solution, there may still be a partial solution, as in
\autoref{eq:reconstruction-in-multi-party-summation:introduction:rank-deficient-observations}.
We will show in \autoref{thm:reconstruction-in-multi-party-summation:the-attack:rref-shows-all-solutions} that finding
the reduced row echelon form of the adversarial knowledge is both necessary and sufficient to find all partial
solutions.
Moreover, we will show in
\autoref{thm:reconstruction-in-multi-party-summation:the-attack:relation-to-adversarial-values} that this is true even
if adversaries' own private values are removed from the adversarial knowledge matrix.

We begin with some definitions.
Let~$G = (V, E)$ be an undirected graph, let~$C \subseteq V$ be a set of $k$~adversaries, let~$n \coloneq \abs{N_G(C)}$,
let~$t \in \mathbb{N}$, and let~$A\theta = \Theta$ be the adversarial knowledge over $t$~summations by~$C$;
that is, $A \in \mathbb{R}^{t \times nt}$.
\begin{definition}[Solution of a variable]
    Let~$y \in \mathbb{R}^{1 \times t}$ and let~$i \in \range{nt}$.
    We say that \enquote{$y$ solves~$\theta_i$ in~$A\theta = \Theta$} if and only if the vector~$yA$ contains exactly
    one non-zero value, at index~$i$:
    \begin{equation}
        \label{eq:reconstruction-in-multi-party-summation:the-attack:y-solves-variable}
        \big( (yA)_i \neq 0 \big)
        \land \big( \forall j \in \range{nt} \setminus i : (yA)_j = 0 \big).
    \end{equation}
\end{definition}
\begin{remark}
    Since \autoref{eq:reconstruction-in-multi-party-summation:the-attack:y-solves-variable} is independent of~$\theta$
    and~$\Theta$, it is equivalent to say that \enquote{$y$ solves~$\theta_i$ in~$A$}.
\end{remark}
\begin{definition}[Partial solution]
    \label{def:reconstruction-in-multi-party-summation:the-attack:system-has-partial-solution}
    Let~$y \in \mathbb{R}^{1 \times t}$.
    If $y$ solves~$\theta_i$ in~$A$ for any~$i \in \range{nt}$, then we say that
    \enquote{$y$ is a partial solution to~$A$}.
\end{definition}

We proceed with the central theorem of this section, which states that the reduced row echelon form of~$A$ describes all
partial solutions to~$A$.
We remark that a weaker variant of this theorem is given by \citeauthor{Wang2002}~\cite{Wang2002} without a formal
proof.
\begin{theorem}
    \label{thm:reconstruction-in-multi-party-summation:the-attack:rref-shows-all-solutions}
    Let~$i \in \range{nt}$, and let~$B \in \mathbb{R}^{t \times t}$ such that~$BA = \rref(A)$.
    Then $\theta_i$ has a solution in~$A$ if and only if there exists~$r \in \range{t}$ such that~$B_r$
    solves~$\theta_i$ in~$A$.
\end{theorem}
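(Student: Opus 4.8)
The plan is to recast the notion of ``$y$ solves $\theta_i$'' as a statement about the row space of~$A$, and then to exploit the canonical structure of~$\rref(A)$. Observe first that a vector $y \in \mathbb{R}^{1 \times t}$ solves $\theta_i$ in~$A$ exactly when $yA$ is a nonzero scalar multiple of the standard basis (row) vector~$e_i \in \mathbb{R}^{1 \times nt}$, which has a single~$1$ in position~$i$. Since the set $\{\, yA : y \in \mathbb{R}^{1 \times t} \,\}$ is precisely the row space of~$A$, and since a subspace contains some nonzero multiple of~$e_i$ if and only if it contains~$e_i$ itself, the statement ``$\theta_i$ has a solution in~$A$'' is equivalent to ``$e_i$ lies in the row space of~$A$''. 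This reformulation is the crux: it converts an existential quantifier over all~$y$ into a single, basis-independent membership question.

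For the backward direction, suppose some row~$B_r$ solves~$\theta_i$; then $y = B_r$ is itself a witnessing solution, and there is nothing further to prove. The forward direction is where the work lies. Here I would invoke the standard fact that~$A$ and its reduced row echelon form share the same row space, so that $e_i$ lies in the row space of~$A$ if and only if it lies in the row space of~$\rref(A)$. I would then write an arbitrary row-space element as a combination $\sum_{r} z_r\, (\rref(A))_r$ of the rows of~$\rref(A)$ and read off its coordinates in the pivot columns: by the defining property of~$\rref$, each pivot column contains a single~$1$, located in its own pivot row, and zeros elsewhere, so the coordinate of the combination in the pivot column of row~$r$ equals exactly~$z_r$.

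Matching this combination against~$e_i$ then pins down the coefficients. If no pivot column equals~$i$, every~$z_r$ is forced to~$0$ and the combination vanishes, contradicting $e_i \neq 0$; hence~$i$ must be a pivot column, say with pivot in row~$r^*$. Matching coordinates in the remaining pivot columns forces $z_r = 0$ for all $r \neq r^*$, so the combination collapses to $z_{r^*}\,(\rref(A))_{r^*}$, which can equal~$e_i$ only if $(\rref(A))_{r^*} = e_i$. Since $\rref(A) = BA$, the $r^*$th row satisfies $B_{r^*} A = (\rref(A))_{r^*} = e_i$, that is, $B_{r^*}$ solves~$\theta_i$, giving the required $r = r^*$. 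I expect the main obstacle to be the bookkeeping in this last step: carefully using the pivot structure of~$\rref$ to justify reading off the~$z_r$ and to rule out any contribution from non-pivot (in particular zero) rows. A pleasant by-product of phrasing everything through $\rref(A) = BA$ is that the argument never uses invertibility of~$B$, so it holds for every~$B$ realising the reduced row echelon form, not merely the one produced by Gaussian elimination.
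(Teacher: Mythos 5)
Your proof is correct and rests on the same two ingredients as the paper's: the equality of the row spaces of~$A$ and~$\rref(A)$, and the fact that a pivot column of~$\rref(A)$ isolates the coefficient of its pivot row in any linear combination. The only difference is organisational --- the paper proves the hard direction by contradiction (two non-zero coefficients in~$y'$ would force two distinct non-zero pivot-column entries in~$yA$), whereas you read off all coefficients directly from the pivot columns and conclude that the single surviving row must itself be~$e_i$; both arguments are sound.
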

\begin{proof}
    Given~$i \in \range{nt}$, we give a proof for both directions.

    We first prove that if there exists $r \in \range{t}$ such that $B_r$ solves~$\theta_i$, then~$\theta_i$ has a
    solution in~$A$.
    Since $A\theta = \Theta$, it follows that~$B_r A \theta = B_r \Theta$, and by
    \autoref{eq:reconstruction-in-multi-party-summation:the-attack:y-solves-variable} we have
    that~$B_r A \theta = \theta_i$.
    Therefore, $\theta_i = B_r \Theta$.
    This proves the first direction of
    \autoref{thm:reconstruction-in-multi-party-summation:the-attack:rref-shows-all-solutions}.

    We prove the other direction of
    \autoref{thm:reconstruction-in-multi-party-summation:the-attack:rref-shows-all-solutions} by contradiction.
    Let~$y \in \mathbb{R}^{1 \times t}$ be a solution to~$\theta_i$ in~$A$, so~$yA$ has its only non-zero value
    at~$(yA)_i$.
    For the sake of contradiction, assume that there is no row in~$B$ that solves~$\theta_i$ in~$A$.
    Because~$y$ is in the row space of~$A$, and the row space of~$A$ is the same as the row space of~$\rref(A)$, there
    exists~$y' \in \mathbb{R}^{1 \times t}$ such that~$yA = y' \cdot \rref(A) = y' BA$.
    By associativity of matrix multiplication, $y'B$ solves~$\theta_i$ in~$A$.
    Furthermore, since we assumed (for the sake of contradiction) that no single row of~$B$ solves~$\theta_i$ in~$A$, it
    follows that $y'$ must have multiple non-zero coefficients.
    Thus, let~$y'_r$ and~$y'_s$ be any two non-zero coefficients in~$y'$, and let~$j, k$ such that~$(BA)_{r, j}$
    and~$(BA)_{s, k}$ are the leading coefficients of their respective rows;
    these are their columns' only non-zero values, and~$j \neq k$.
    Therefore, $(yA)_j = (y' BA)_j = y'_r \neq 0$, and similarly~$(yA)_k = y'_s \neq 0$.
    However, this is a contradiction, because we initially assumed that~$yA$ has its only non-zero value at~$(yA)_i$.
    Therefore, there exists a row in~$B$ that solves~$\theta_i$ in~$A$.
    This proves the other direction of
    \autoref{thm:reconstruction-in-multi-party-summation:the-attack:rref-shows-all-solutions}.

    Therefore, it is both necessary and sufficient to check the rows of~$BA = \rref(A)$ to learn all partial solutions
    to~$A$.
\end{proof}

Note that~$A$ does not describe that adversaries know each other's private values, since~$N_G(C)$ excludes adversaries
themselves.
We show that including this knowledge does not reveal new partial solutions.
Specifically, observe that the adversarial knowledge \textit{including self-knowledge} over $t$~summations by
$k$~adversaries~$C$ is
\begin{equation}
    \label{eq:reconstruction-in-multi-party-summation:the-attack:matrix-with-self-knowledge}
    A' =
    \begin{bmatrix}
        A & R      \\
        0 & I_{tk}
    \end{bmatrix},
\end{equation}
where~$I_{tk}$ is the $(tk \times tk)$~identity matrix, $0$~is an appropriately-sized matrix of 0s, and $R$~is some
appropriately-sized binary matrix.
The rows of~$I_{tk}$ represent that adversaries know each other's values, and $R$~represents the edges between
adversaries.
\begin{theorem}
    \label{thm:reconstruction-in-multi-party-summation:the-attack:relation-to-adversarial-values}
    Let~$i < tn$.
    Then $\theta_i$ has a solution in~$A$ if and only if $\theta_i$ has a solution in~$A'$.
\end{theorem}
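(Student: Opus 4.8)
The plan is to exploit the block-triangular structure of $A'$ and argue directly from the definition of solving a variable, without routing through \autoref{thm:reconstruction-in-multi-party-summation:the-attack:rref-shows-all-solutions}. The central observation is that any candidate coefficient vector $y' \in \mathbb{R}^{1 \times (t + tk)}$ for $A'$ splits as $y' = [\,y_1 \mid y_2\,]$, where $y_1 \in \mathbb{R}^{1 \times t}$ selects among the original summation rows $[A \mid R]$ and $y_2 \in \mathbb{R}^{1 \times tk}$ selects among the self-knowledge rows $[0 \mid I_{tk}]$. Multiplying out the blocks gives
\[
    y' A' = [\,y_1 A \mid y_1 R + y_2\,].
\]
Hence the first $nt$ coordinates of $y' A'$ depend on $y'$ only through $y_1 A$, whereas the trailing $tk$ coordinates can be forced to any value by choosing $y_2$ appropriately. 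Because $i < tn$, the variable $\theta_i$ is a non-adversary value whose column lies wholly inside the first block.

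For the direction \enquote{a solution in $A$ yields a solution in $A'$}, I would take a solution $y$ to $\theta_i$ in $A$ and set $y_1 = y$ and $y_2 = -yR$. Then $y' A' = [\,yA \mid 0\,]$, whose unique non-zero entry is at index $i$ by the assumption on $y$, so $y'$ solves $\theta_i$ in $A'$. The only idea needed is that the free block $y_2$ can always absorb the contamination $y_1 R$ contributed by edges between adversaries.

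For the converse, I would take a solution $y' = [\,y_1 \mid y_2\,]$ to $\theta_i$ in $A'$ and read off $y_1$. Since $i < tn$, the requirement that $y' A'$ be non-zero only at index $i$ forces, on the first $nt$ coordinates, that $y_1 A$ is non-zero only at index $i$; the constraints on the remaining coordinates involve $y_2$ and place no condition on $y_1 A$. Thus $y_1$ solves $\theta_i$ in $A$, which completes the equivalence.

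I do not anticipate a substantive obstacle: the argument is essentially block bookkeeping, and no consistency or rank hypothesis beyond the given structure of $A'$ is needed. The single point requiring care is the column-ordering convention --- one must confirm that the $nt$ non-adversary values are indexed before the $tk$ adversary values, so that $i < tn$ genuinely places $\theta_i$ in the $A$-block and the block multiplication aligns. Everything else follows from $y_2$ being unconstrained.
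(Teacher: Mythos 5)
Your proof is correct, but it takes a genuinely different route from the paper's. The paper argues through \autoref{thm:reconstruction-in-multi-party-summation:the-attack:rref-shows-all-solutions}: it computes $\rref(A') = \left[\begin{smallmatrix} \rref(A) & 0 \\ 0 & I_{tk} \end{smallmatrix}\right]$ (up to row permutations) and then reads off which variables are solved by which rows, relying on the earlier theorem to equate \enquote{has a solution} with \enquote{some row of the rref solves it}. You instead work directly from the definition of a partial solution, decomposing any candidate vector as $y' = [\,y_1 \mid y_2\,]$ so that $y'A' = [\,y_1 A \mid y_1 R + y_2\,]$, constructing $y_2 = -y_1 R$ explicitly in the forward direction and projecting onto $y_1$ in the converse. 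Your version is more self-contained (it does not depend on the rref machinery at all) and exhibits the solving vectors constructively; the paper's version is shorter given that the rref theorem is already in hand, and as a by-product identifies exactly which variables the bottom $tk$ rows solve (the adversaries' own values, $i \geq tn$), though it glosses over the row-permutation detail and the elimination of $R$ that your computation makes explicit. Your one stated worry --- the column-ordering convention --- is settled by the paper's definition of $A'$ in \autoref{eq:reconstruction-in-multi-party-summation:the-attack:matrix-with-self-knowledge}, which places the $nt$ neighbour columns before the $tk$ adversary columns, exactly as you assume.
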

\begin{proof}
    Observe that
    \begin{equation}
        \label{eq:reconstruction-in-multi-party-summation:the-attack:self-knowledge-rref}
        \rref(A') =
        \begin{bmatrix}
            \rref(A) & 0      \\
            0        & I_{tk}
        \end{bmatrix},
    \end{equation}
    ignoring row-switching transformations.
    The bottom $tk$~rows solve exactly~$\theta_i$ in~$A$ for~$i \geq tn$.
    The upper rows solve~$\theta_i$ in~$A$ for~$i < tn$ if and only if the rows of~$\rref(A)$ do so.
\end{proof}
Intuitively, \autoref{thm:reconstruction-in-multi-party-summation:the-attack:relation-to-adversarial-values} holds
because the linear dependencies that exist within~$A$ remain unaffected by~$R$.

\subsection{Reconstruction Attack Feasibility}
\label{subsec:reconstruction-in-multi-party-summation:feasibility}
We show that reconstruction is feasible for honest-but-curious adversaries.
We run the attack in static graphs with randomly-placed adversaries passively collecting data.
We measure both the success rate and the number of rounds until success.
Our source code is publicly available~\cite{Dekker2023}.

\begin{remark}
    This section pertains only to static graphs.
    We show a reduction from edge-dynamic graphs to static graphs in \autoref{subsec:resistance-by-girth:dynamic-edges}.
\end{remark}

\subsubsection{Experimental setup}
By \autoref{thm:reconstruction-in-multi-party-summation:the-attack:rref-shows-all-solutions}, the success rate of the
attack depends only on the adversaries' direct neighbourhood.
Therefore, instead of modeling large peer-to-peer networks, it suffices to model only the subgraph that is relevant for
the attack.
Additionally, by \autoref{thm:reconstruction-in-multi-party-summation:the-attack:relation-to-adversarial-values}, edges
between adversaries can be ignored.
Therefore, given any graph~$G = (V, E)$ and a set of colluding adversaries~$C \subseteq V$, it suffices to model the
induced subgraph~$G[C]$, minus edges between adversaries.
This forms a bipartite graph~$H$.
We provide an example of graph induction in \autoref{fig:reconstruction-in-multi-party-summation:feasibility:graphs}.
\begin{figure}[htb]
    \centering
    \begin{tikzpicture}
[node/.style={draw, circle}, adversary/.style={fill=pink}]
    \node[node, adversary]             (1) {$V_1$};
    \node[node, adversary, right=of 1] (2) {$V_2$};
    \node[node, adversary, right=of 2] (3) {$V_3$};
    \node[node,            below=of 1] (4) {$V_4$};
    \node[node,            right=of 4] (5) {$V_5$};
    \node[node,            right=of 5] (6) {$V_6$};
    \node[node,            right=of 6] (7) {$V_7$};
    \node[node,            below=of 5] (8) {$V_8$} [dashed];
    \node[node,            right=of 8] (9) {$V_9$} [dashed];

    \draw (1) -- (2) [dashed];
    \draw (1) -- (4);
    \draw (1) -- (5);
    \draw (2) -- (4);
    \draw (2) -- (5);
    \draw (2) -- (6);
    \draw (3) -- (6);
    \draw (3) -- (7);
    \draw (4) -- (8) [dashed];
    \draw (5) -- (6) [dashed];
    \draw (5) -- (8) [dashed];
    \draw (6) -- (7) [dashed];
    \draw (6) -- (8) [dashed];
    \draw (6) -- (9) [dashed];
\end{tikzpicture}
    \caption{
        A graph~$G$.
        Adversaries~$C = \{ V_1, V_2, V_3 \}$ are shaded.
        The bipartite subgraph~$H = G[C]$ consists of exactly the non-dotted nodes and edges.
    }
    \Description{
        A graph with nine nodes.
        Three nodes are adversaries and six nodes are non-adversaries.
        Of the six non-adversaries, the two nodes that are not connected to an adversary are dotted.
        All edges between adversaries and all edges between non-adversaries are dotted.
        The only elements that are not dotted are the adversaries, the direct neighbours of the adversaries, and the
        edges that connect an adversary to a non-adversary.
    }
    \label{fig:reconstruction-in-multi-party-summation:feasibility:graphs}
\end{figure}
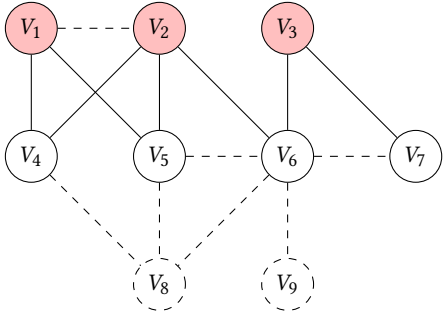

We emphasise that reconstruction depends only on the adversaries' view, regardless of the remaining graph outside this
view.
However, the likelihood of obtaining any specific adversarial view \textit{does} depend on the full graph.
For example, the probability that a random adversarial view contains a cycle depends on the connectivity of the full
graph.
For our experiments, we choose not to make assumptions on the graph's topology, analysing all possible adversarial views
equally, so that our results are agnostic to the specific network, application, and adversary.

Bipartite graphs can be parameterised by three variables:
the number of adversaries, the number of direct neighbours, and the number of edges.
We generate random graphs according to these parameter, subject to some filtering:
\begin{itemize}
    \item
    We exclude graphs in which there is a adversary with only one edge because this would allow trivial attacks, as
    described in \autoref{subsec:preliminaries:assumptions-and-notation}.

    \item
    We do \textit{not} exclude graphs in which there is an honest-but-curious user with only one edge, because this user
    may have more edges in~$G$ that are not in~$H$.

    \item
    We exclude graphs in which an honest-but-curious user has no neighbours, because these cases do not accurately
    represent the bipartite graph's parameters.

    \item
    We do \textit{not} exclude graphs in which an adversary has no neighbours.

    \item
    We do \textit{not} exclude disconnected graphs.
\end{itemize}

\subsubsection{Amount of reconstructed data}
For our first experiment, we measure the amount of private data that adversaries can reconstruct.
We generate a large amount of random bipartite graphs as described above, and count the number of partial solutions in
the biadjacency matrices.
This corresponds to the adversarial knowledge if neighbours do not update their values, and thus represents the
strongest reconstruction attack that adversaries can perform.
In \autoref{subsubsec:reconstruction-in-multi-party-summation:feasibility:rounds-before-success} we also consider
neighbours updating their values.

Firstly, we look at the proportion of data that can be reconstructed, shown in
\autoref{fig:reconstruction-in-multi-party-summation:feasibility:data-reconstructed}.
We see that if the number of adversaries is close to the number of neighbours, the adversary is typically able to
reconstruct all neighbours' data.
As the number of neighbours increases, fewer data can be reconstructed, unless compensated for by a higher connectivity.
If the graph has many neighbours and few edges, adversaries share fewer neighbours, and are thus typically unable to
exploit the overlaps in their aggregates.

Secondly, we look at the distribution of how much data can be reconstructed, shown in
\autoref{fig:reconstruction-in-multi-party-summation:feasibility:data-reconstructed-frequency}.
We see again that adversaries are more successful if they outnumber their neighbours.
As the number of neighbours increases, so does the probability of being unable to reconstruct any data.
However, even if three adversaries passively observe 15~neighbours, they still have an 11.0\%~probability of
reconstructing at least one neighbour's datum, which is unacceptable for any privacy-preserving scheme.

\subsubsection{Rounds until first reconstruction}
\label{subsubsec:reconstruction-in-multi-party-summation:feasibility:rounds-before-success}
Some partial solutions are harder to obtain than others.
For example, if the graph is such that users update their values faster than adversaries can collect them, adversaries
may never \enquote{converge} to a (partial) solution.

In the next experiment, we measure how many rounds adversaries need before reconstruction succeeds.
For each of the subgraphs in \autoref{fig:reconstruction-in-multi-party-summation:feasibility:data-reconstructed} that
were found to be susceptible to the attack, we simulate a multi-party summation protocol as follows.
Each round, a uniformly random user in the subgraph wakes up.
If the user is an adversary, they learn the sum of their neighbours' values, and adds this to the adversarial knowledge.
Otherwise, if a non-adversary wakes up, we simulate an update:
The next adversarial sum that includes this non-adversary will use a new column in the adversarial knowledge matrix.
After every round, the adversaries check for a partial solution.
We repeat this procedure 100~times to control for the order in which users wake up, truncate instances that have no
partial solutions after 250~rounds, and take the mean number of rounds until the first partial solution is found.

We show the mean number of rounds until the reconstruction attack succeeds in
\autoref{fig:reconstruction-in-multi-party-summation:feasibility:collusion-rounds}.
We see that the attack is fastest when there are more adversaries, more edges, and fewer neighbours.
Intuitively, this means that the required number of summations increases if neighbours can update their values at a
higher rate than adversaries can observe them.
For example, 3~adversaries against 15~neighbours require on average 8.8~rounds before they can reconstruct private data.
In related works such as~\cite{Vanhaesebrouck2017, Cheng2018, Cyffers2024}, users run hundreds or thousands of rounds
before the protocol terminates, significantly more than required in our attack.

\subsubsection{Conclusion of results}
We sampled all possible views of randomly selected adversaries in random graphs, excluding some trivial attack cases.
If the reconstruction attack succeeds, the adversaries obtain other users' private inputs to the
information-theoretically secure summation operation.
Our results show that passive honest-but-curious adversaries are able to obtain private data in this scenario with
non-negligible probability.
While we note that different classes of graph topologies may have varying susceptibility to reconstruction attacks, we
conclude that, in general, individually protecting each summation is insufficient for confidentiality.

\begin{figure*}[h]
    \centering

    \begin{subfigure}[t]{.33\linewidth}
        \includegraphics[width=\linewidth]{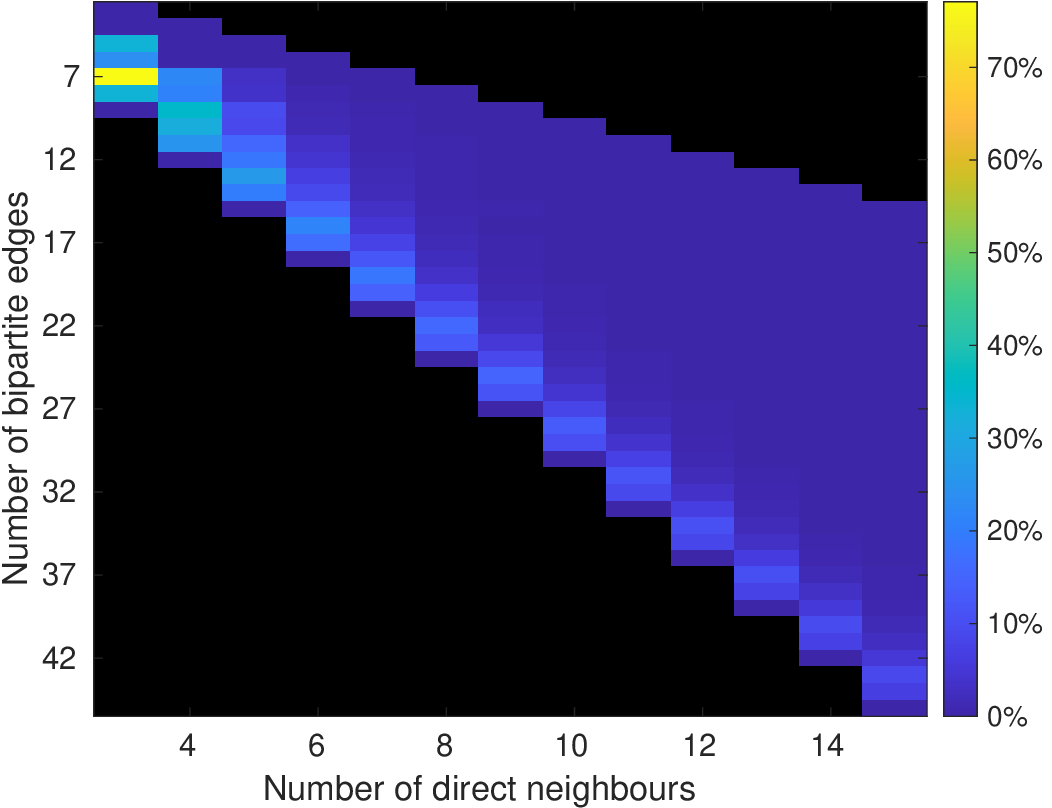}
        \caption{Three adversaries}
        \Description{
            A heatmap.
            The x-axis shows the number of direct neighbours, ranging from three to fifteen.
            The y-axis shows the number of bipartite edges, ranging from two to fourty-five.
            For each coordinate, a colour gradient shows the percentage of successful attacks.

            Some areas of the heatmap are not coloured because they indicate impossible graph configurations.
            For example, given three direct neighbours there cannot be thirty edges.
            Given the number of direct neighbours as x, the lower bound on the number of edges is x, and the upper bound
            is three times x, because this figure shows results for three adversaries.

            The figure shows that most viable graphs have a success percentage less than ten percent.
            However, as the number of edges increases, so does the success rate.
            The increase in percentage is more extreme for graphs with fewer direct neighbours.
            For fifteen direct neighbours and many edges, the success rate increases to fifteen percent.
            For three direct neighbours and many edges, the success rate increases to seventy percent.
            However, in all cases, having a complete graph drops the success rate to zero.
        }
        \label{fig:reconstruction-in-multi-party-summation:feasibility:data-reconstructed:3}
    \end{subfigure}%
    \hfil%
    \begin{subfigure}[t]{.33\linewidth}
        \includegraphics[width=\linewidth]{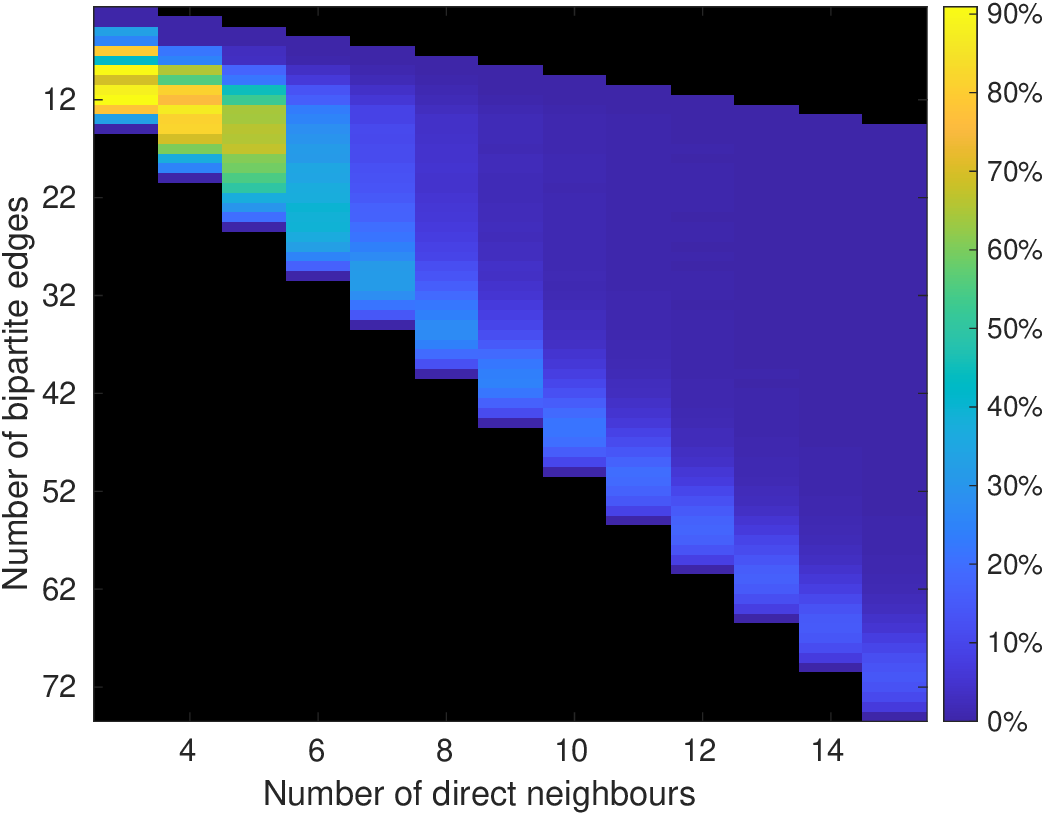}
        \caption{Five adversaries}
        \Description{
            A heatmap similar to the previous figure, Figure 4a.
            The x-axis and y-axis show similar data, but the y-axis now ranges from three to seventy-five.
            Again, many viable graphs have a success percentage of ten percent, but this increases with the number of
            edges.
            However, the success rate increases faster than in the previous figure, and there is now a larger area with
            success rates of sixty to ninety percent.
            This area is concentrated around three to five direct neighbours, starting at half the possible number of
            edges for the given number of direct neighbours, and ending a few edges above the maximum.
            As before, complete graphs have success rates of zero.
        }
        \label{fig:reconstruction-in-multi-party-summation:feasibility:data-reconstructed:5}
    \end{subfigure}%
    \hfil%
    \begin{subfigure}[t]{.33\linewidth}
        \includegraphics[width=\linewidth]{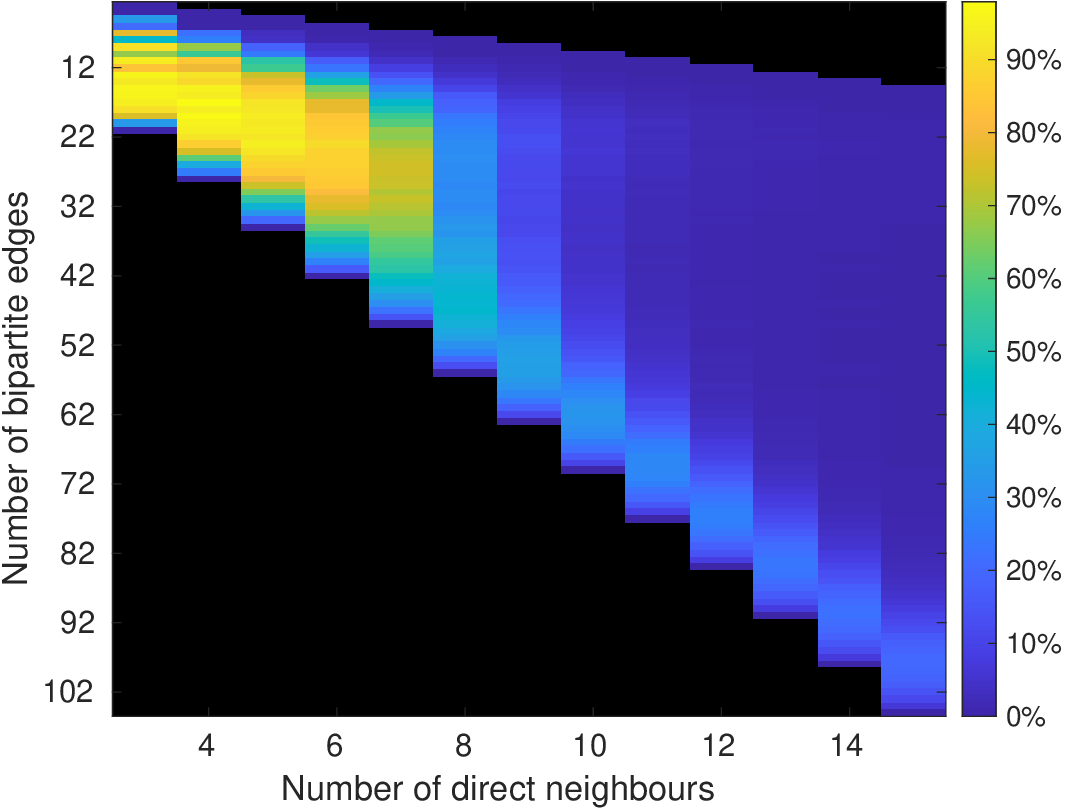}
        \caption{Seven adversaries}
        \Description{
            A heatmap similar to the previous figure, Figure 4b.
            The x-axis and y-axis show similar data, but the y-axis now ranges from three to one-hundred-and-five.
            The area of success rates higher than sixty percent is now even larger, ranging from three direct neighbours
            to seven direct neighbours, and from a third of possible edges to some portion less than the complete graph.
            This area radiates a cone towards the higher numbers of direct neighbours with success rates of thirty to
            fourty percent for eight to ten direct neighbours, decreasing as the number of direct neighbours increases.
            As before, complete graphs have success rates of zero.
        }
        \label{fig:reconstruction-in-multi-party-summation:feasibility:data-reconstructed:7}
    \end{subfigure}

    \caption{
        Proportion of neighbours' private data that can be reconstructed by adversaries.
        Each point represents the mean over 1000~random bipartite graphs.
        Black points indicate no valid bipartite graphs could be found.
        Note the different y-axes.
    }
    \label{fig:reconstruction-in-multi-party-summation:feasibility:data-reconstructed}
    \vspace{2em}

    \begin{subfigure}[t]{.33\linewidth}
        \includegraphics[width=\linewidth]{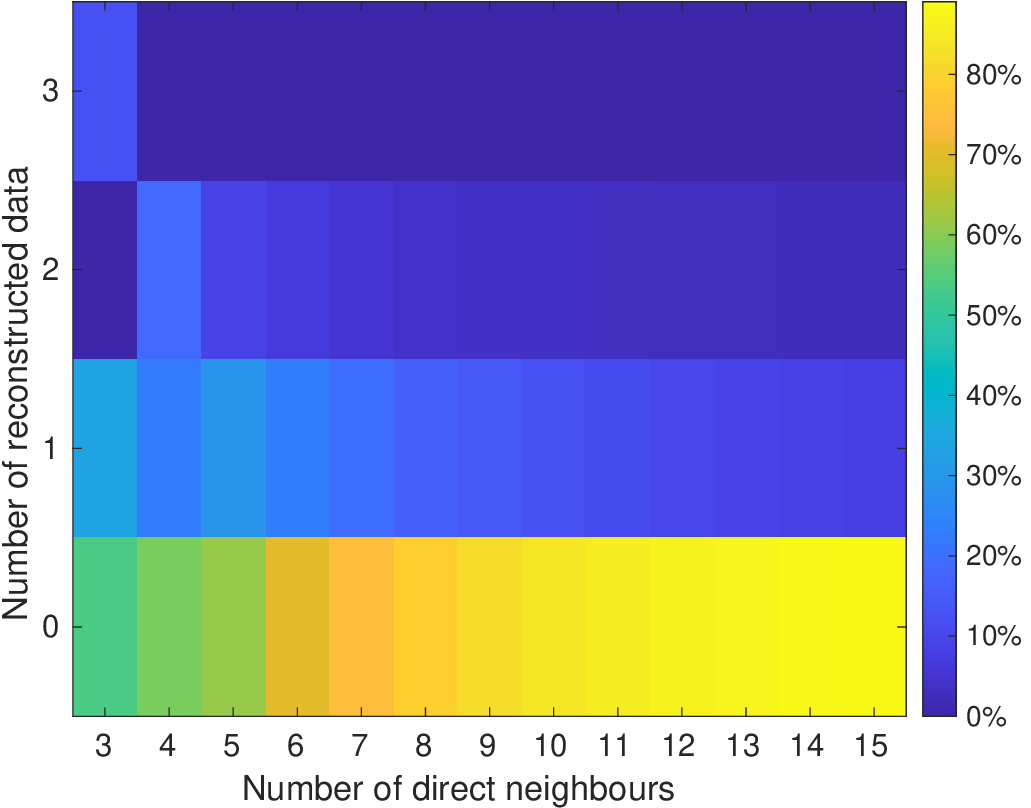}
        \caption{Three adversaries}
        \Description{
            A heatmap.
            The x-axis shows the number of direct neighbours, ranging from three to fifteen.
            The y-axis shows the number of reconstructed data, ranging from zero to three.
            For each coordinate, a colour gradient shows the relative percentage that that number of data is
            reconstructed.

            For three direct neighbours, there is, respectively for zero, one, two, or three data reconstructed, a
            relative probability of fifty, thirty, zero, or twenty percent.
            Meanwhile, for the maximum of fifteen direct neighbours, the relative probabilities are eighty, fifteen, and
            five percent.
            In between, as the number of direct neighbours increases, the probability for high amounts of reconstructed
            data drop off quickly.
        }
        \label{fig:reconstruction-in-multi-party-summation:feasibility:data-reconstructed-frequency:3}
    \end{subfigure}%
    \hfil%
    \begin{subfigure}[t]{.33\linewidth}
        \includegraphics[width=\linewidth]{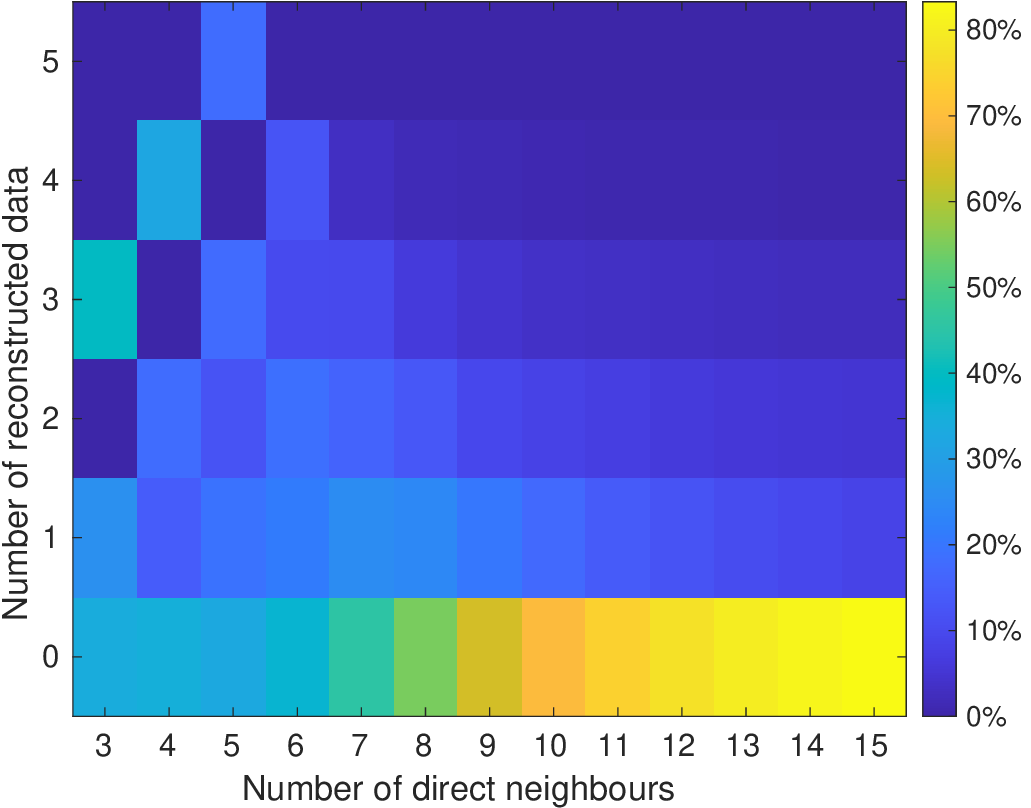}
        \caption{Five adversaries}
        \Description{
            A heatmap similar to the previous figure, Figure 5a.
            The x-axis and y-axis show similar data, but the y-axis now ranges from zero to five.
            Unlike the previous figure, three direct neighbours now has a higher probability for reconstructing more
            data, though it is obviously not possible to reconstruct more than three pieces of data.
            As before, reconstructing one datum fewer than the maximum has a zero percent chance.
            The gradient increases similar to the previous figure, but it is notable that up until five direct
            neighbours, there is always a zero percent chance of reconstructing one datum fewer than the maximum.
        }
        \label{fig:reconstruction-in-multi-party-summation:feasibility:data-reconstructed-frequency:5}
    \end{subfigure}%
    \hfil%
    \begin{subfigure}[t]{.33\linewidth}
        \includegraphics[width=\linewidth]{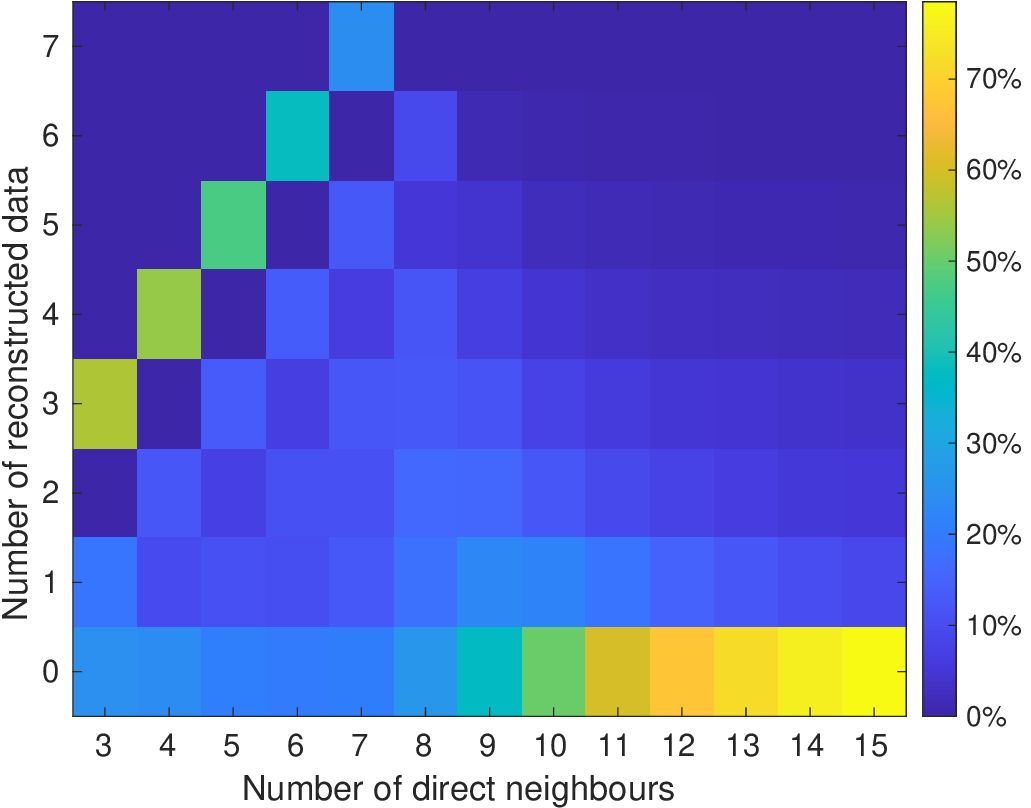}
        \caption{Seven adversaries}
        \Description{
            A heatmap similar to the previous figure, Figure 5b.
            The x-axis and y-axis show similar data, but the y-axis now ranges from zero to seven.
            The patterns observed in the previous two figures extrapolate trivially.
        }
        \label{fig:reconstruction-in-multi-party-summation:feasibility:data-reconstructed-frequency:7}
    \end{subfigure}

    \caption{
        Probability of reconstructing a given number of neighbours' data, ignoring the number of edges.
        Each column adds up to~100\%, and corresponds to a column in
        \autoref{fig:reconstruction-in-multi-party-summation:feasibility:data-reconstructed}.
    }
    \label{fig:reconstruction-in-multi-party-summation:feasibility:data-reconstructed-frequency}
    \vspace{2em}

    \begin{subfigure}[t]{.33\linewidth}
        \includegraphics[width=\linewidth]{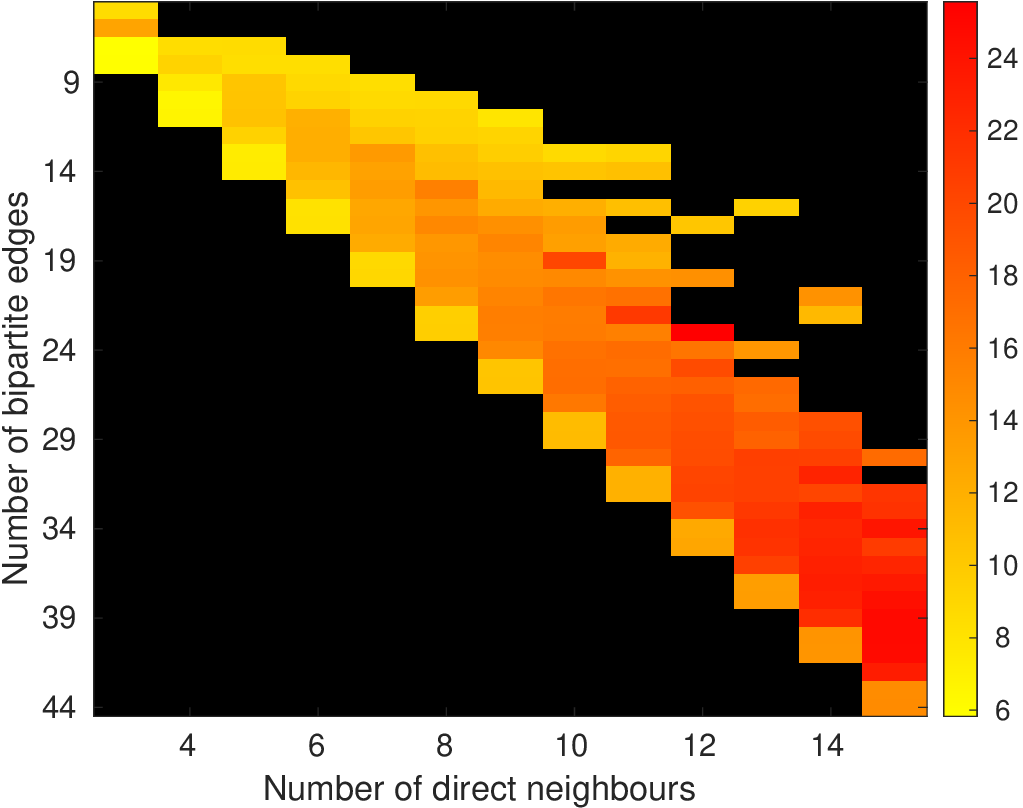}
        \caption{Three adversaries}
        \Description{
            A heatmap.
            The x-axis shows the number of direct neighbours, ranging from three to fifteen.
            The y-axis shows the number of bipartite edges, ranging from two to fourty-five.
            For each coordinate, a colour gradient shows the number of rounds needed.

            Some areas of the heatmap are not coloured because there no attacks succeeded for that combination of
            parameters.
            This is either because there exists no graph with these parameters, or because no solvable graphs were found
            with the same coordinate in Figure 4a.

            The figure shows that the number of required rounds increases as the number of direct neighbours increases.
            The number of required rounds is distinctly lower for nearly-complete graphs.
            With many direct neighbours, attacks typically did not succeed at all, showing a number of gaps in these
            areas.
        }
        \label{fig:reconstruction-in-multi-party-summation:feasibility:collusion-rounds:3}
    \end{subfigure}%
    \hfil%
    \begin{subfigure}[t]{.33\linewidth}
        \includegraphics[width=\linewidth]{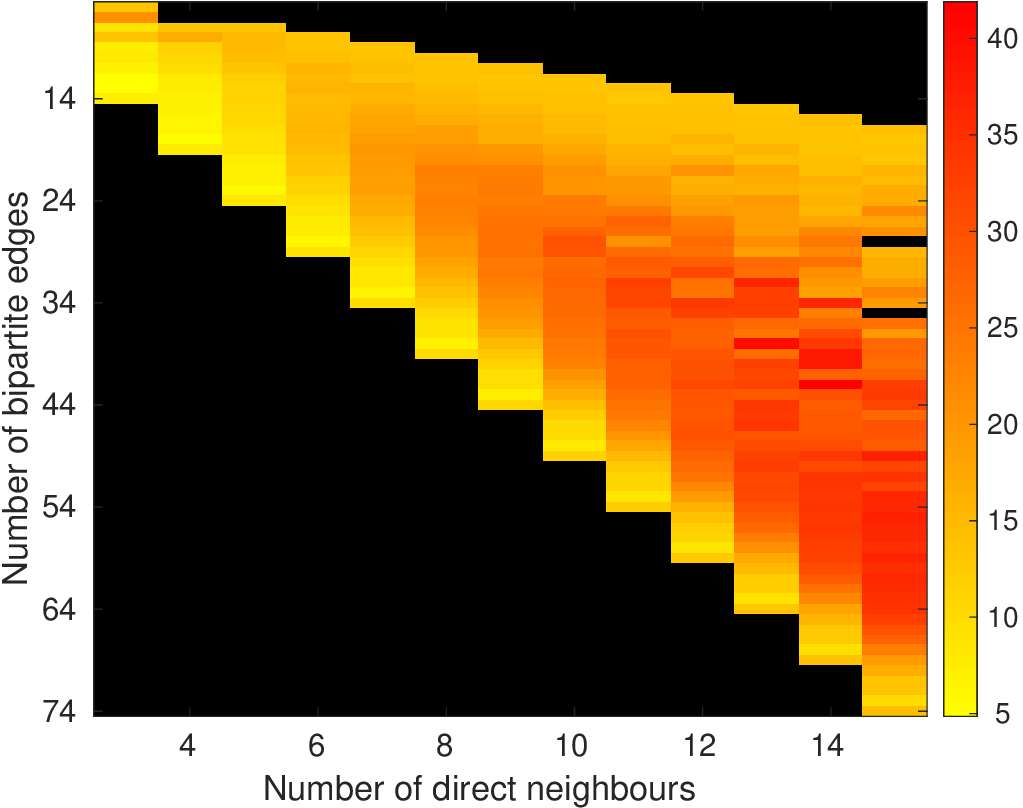}
        \caption{Five adversaries}
        \Description{
            A heatmap similar to the previous figure, Figure 6a.
            The x-axis and y-axis show similar data, but the y-axis now ranges from three to seventy-five.
            The patterns observed in the previous figure still hold, but are more exacerbated.
            Additionally, the number of holes has decreased, and in places where this figure overlaps with the previous
            figure, the number of required rounds has decreased.
        }
        \label{fig:reconstruction-in-multi-party-summation:feasibility:collusion-rounds:5}
    \end{subfigure}%
    \hfil%
    \begin{subfigure}[t]{.33\linewidth}
        \includegraphics[width=\linewidth]{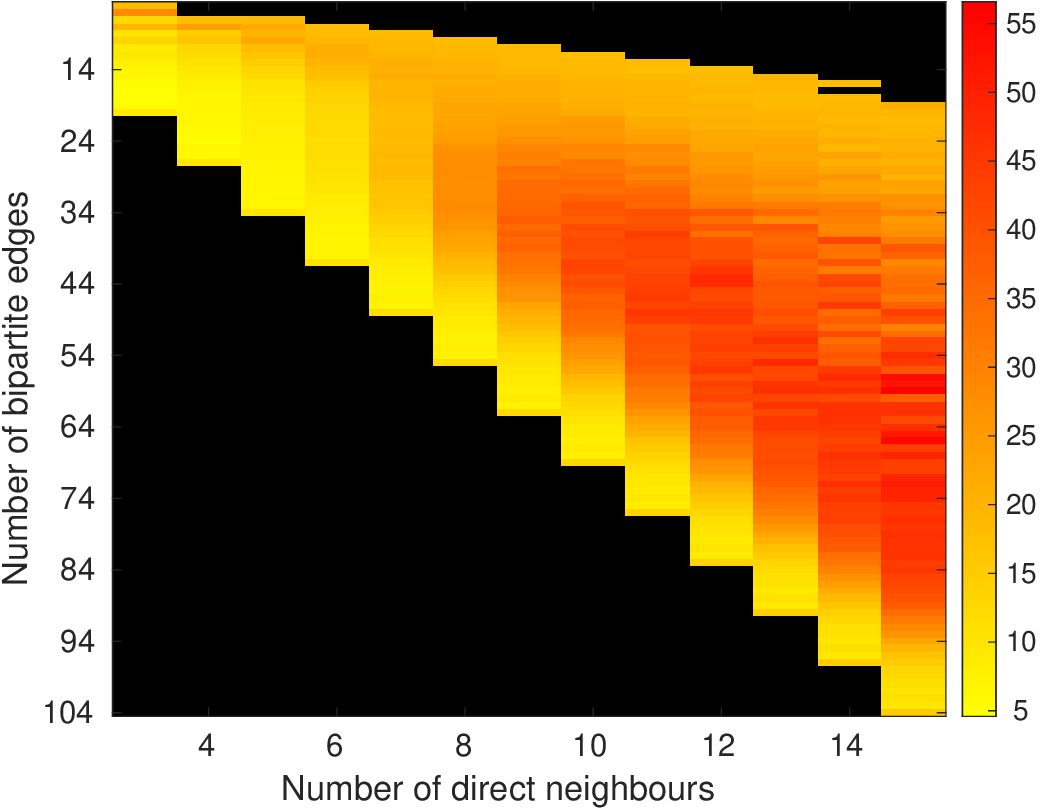}
        \caption{Seven adversaries}
        \Description{
            A heatmap similar to the previous figure, Figure 6b.
            The x-axis and y-axis show similar data, but the y-axis now ranges from three to one-hundred-and-five.
            The patterns observed in the previous two figures extrapolate trivially.
        }
        \label{fig:reconstruction-in-multi-party-summation:feasibility:collusion-rounds:7}
    \end{subfigure}

    \caption{
        Mean number of adversarial summations needed to obtain private data.
        Each point corresponds to 100~attacks on each of the solvable graphs from
        \autoref{fig:reconstruction-in-multi-party-summation:feasibility:data-reconstructed}.
    }
    \label{fig:reconstruction-in-multi-party-summation:feasibility:collusion-rounds}
\end{figure*}
\cleardoublepage

    \section{Girth as a Peer-to-Peer Reconstruction Countermeasure}\label{sec:resistance-by-girth}
In a centralised protocol, the single aggregator can track which summations have occurred, and refuse a subsequent
summation if it would result in a partial solution.
However, in a distributed computation, there is no such aggregator, and simulating the aggregator using a multi-party
protocol is impractical as this would require involving all users in each summation.
In this section, we show that to prevent reconstruction it is sufficient to increase the network's girth, which is the
length of the network's shortest cycle.
The network's girth is an established metric for peer-to-peer networks, with various peer-to-peer algorithms for
measuring and increasing the girth~\cite{Censor2021, Dolev2008, Lazebnik1995, Oliva2018}.
Using such an algorithm before running a privacy-preserving dynamic-data multi-party summation protocol is thus
sufficient to prevent reconstruction of private data by honest-but-curious adversaries.

We begin in \autoref{subsec:resistance-by-girth:single-adversary} by showing that reconstruction requires collusion.
In \autoref{subsec:resistance-by-girth:acyclic-graphs}, we show that reconstruction does not work in acyclic
graphs, regardless of the number adversaries.
In \autoref{subsec:resistance-by-girth:bounded-collusion}, generalise results to determine an upper
bound on the number of adversaries.
In \autoref{subsec:resistance-by-girth:dynamic-edges}, consider graphs with dynamic edges.
Finally, in \autoref{subsec:resistance-by-girth:impact-on-convergence}, we briefly evaluate the impact that increasing
girth has on distributed convergence.

\subsection{Privacy in Static Graphs without Collusion}\label{subsec:resistance-by-girth:single-adversary}
We begin by considering the special case of $k = 1$, i.e.\ a setting without collusion.
We show that, if the graph is static, the adversary cannot obtain other users' private values regardless of topology,
barring trivial attacks.

Assuming a privacy-preserving summation protocol, it is self-evident that repeating the summation over the same set of
values does not leak any private data.
However, while the set of neighbours is always the same in the static no-collusion setting, neighbours still update
their local values.
Thus, it remains to be shown that no reconstruction is possible with this kind of composition.
\begin{lemma}
    \label{lem:single-adversary:same-constant-for-neighbours}
    Given adversarial knowledge~$A \in \mathbb{R}^{t \times nt}$ of a single adversary with $n \geq 2$ fixed neighbours,
    we have for any~$y \in \mathbb{R}^{1 \times t}$
    \begin{equation}
        \label{eq:single-adversary:same-constant-for-neighbours}
        \forall \mu, \nu \in \range{n} :
        \sum_{i \in \range{t}} (yA)_{\mu t + i} = \sum_{i \in \range{t}} (yA)_{\nu t + i}.
    \end{equation}
    Here, $\sum_{i \in \range{t}} (yA)_{\nu t + i}$ is the sum of components of~$yA$ relating to neighbour~$\nu$.
    The equation states that in any linear combination~$yA$, every neighbour has the same sum of components.
\end{lemma}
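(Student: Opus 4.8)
The plan is to exploit the rigid block structure of the adversarial knowledge matrix~$A$ in the single-adversary case, reducing the claim to a bookkeeping identity about row sums. The essential observation is that, because there is only one adversary~$c$ with a fixed neighbour set, \emph{every} summation is taken over exactly the same neighbours~$N_G(c)$; consequently each of the $n$~neighbours contributes to each of the $t$~equations, and contributes exactly once (namely its current value at the time of that summation). This uniformity is precisely what forces every neighbour to carry the same total weight in any linear combination.

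First I would establish the block-row-sum identity: for every row~$\tau \in \range{t}$ and every neighbour~$\nu \in \range{n}$,
\begin{equation*}
    \sum_{i \in \range{t}} A_{\tau, \nu t + i} = 1.
\end{equation*}
This follows by combining the two properties of adversarial knowledge. By \autoref{pro:reconstruction-in-multi-party-summation:obtained-adversarial-knowledge:single-private-value-per-equation}, this block sum lies in~$\{ 0, 1 \}$. By \autoref{pro:reconstruction-in-multi-party-summation:obtained-adversarial-knowledge:sum-iff-neighbour}, with~$C = \{ c \}$, the block sum equals~$1$ precisely when~$(c, N_G(C)_\nu) \in E$; since~$N_G(C)_\nu$ is by definition a neighbour of~$c$, this edge always exists, forcing the block sum to~$1$ for every~$\tau$ and~$\nu$.

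Second, I would compute the neighbour-block sum of~$yA$ by interchanging the order of summation. Writing~$(yA)_{\nu t + i} = \sum_{\tau \in \range{t}} y_\tau A_{\tau, \nu t + i}$ and summing over~$i$ gives
\begin{equation*}
    \sum_{i \in \range{t}} (yA)_{\nu t + i}
    = \sum_{\tau \in \range{t}} y_\tau \sum_{i \in \range{t}} A_{\tau, \nu t + i}
    = \sum_{\tau \in \range{t}} y_\tau,
\end{equation*}
where the last step applies the block-row-sum identity. The right-hand side does not depend on~$\nu$, so the same value is obtained for~$\mu$ and~$\nu$ alike, which is exactly \autoref{eq:single-adversary:same-constant-for-neighbours}.

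The only delicate point, and the step I would be most careful about, is the first one: correctly invoking the \enquote{sum-iff-neighbour} property to upgrade the block sum from \enquote{at most one} to \enquote{exactly one}. This is where the single-adversary, fixed-neighbour hypothesis (and the assumption~$n \geq 2$, which guarantees there are at least two neighbours to compare) genuinely enters; it is what makes every neighbour appear with total weight~$1$ in each equation. The remaining argument is then a routine interchange of finite sums and carries no further obstacle.
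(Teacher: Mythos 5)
Your proposal is correct and follows essentially the same route as the paper: establish that each row of~$A$ contains each neighbour's block exactly once (the paper asserts this directly from the fixed-neighbour hypothesis, where you derive it slightly more explicitly from the two stated properties), then interchange the order of summation to show that every neighbour's block sum equals~$\sum_{\tau \in \range{t}} y_\tau$, independent of~$\nu$. No gaps.
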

\begin{proof}
    Firstly, because the adversary has fixed neighbours,
    \begin{equation}
        \label{eq:single-adversary:each-neighbour-in-each-equation}
        \forall \tau \in \range{t}, \nu \in \range{n} :
        \sum_{i \in \range{t}} A_{\tau, \nu t + i} = 1.
    \end{equation}
    In the linear combination~$yA$, the rows of~$A$ are scaled according to~$y$ and then summed together.
    Therefore, since each row includes each neighbour exactly once,
    \begin{equation}
        \label{eq:single-adversary:define-neighbour-constant}
        \forall \nu \in \range{n} :
        \sum_{i \in \range{t}} (yA)_{\nu t + i} = \sum_{\tau \in \range{t}} y_\tau.
    \end{equation}
\end{proof}
\begin{corollary}
    \label{cor:single-adversary:not-exactly-one-non-zero}
    Given adversarial knowledge~$A \in \mathbb{R}^{t \times nt}$ of a single adversary with $n \geq 2$ fixed neighbours,
    there exists no~$y \in \mathbb{R}^{1 \times t}$ such that~$yA$ has exactly one non-zero value.
    Therefore, there exist no partial solutions for~$A$.
\end{corollary}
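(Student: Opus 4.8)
The plan is to obtain the corollary as an almost immediate consequence of the preceding \autoref{lem:single-adversary:same-constant-for-neighbours}, arguing by contradiction. First I would assume, for the sake of contradiction, that there exists some $y \in \mathbb{R}^{1 \times t}$ such that $yA$ has exactly one non-zero value, say at index $j \in \range{nt}$. Since every index of $\theta$ belongs to exactly one neighbour's block of $t$ columns, I would write $j = \mu t + i_0$ for a unique neighbour $\mu \in \range{n}$ and offset $i_0 \in \range{t}$, so that the single non-zero entry lies in the block associated with neighbour~$\mu$.

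The heart of the argument is then to compare the per-neighbour component sums that \autoref{lem:single-adversary:same-constant-for-neighbours} forces to be equal. On the one hand, summing the components of $yA$ belonging to neighbour~$\mu$ gives $\sum_{i \in \range{t}} (yA)_{\mu t + i} = (yA)_j \neq 0$, because $(yA)_j$ is the unique non-zero entry and all other entries in $\mu$'s block vanish. On the other hand, here I would invoke the hypothesis $n \geq 2$ to select a second neighbour $\nu \in \range{n}$ with $\nu \neq \mu$; since the only non-zero entry of $yA$ sits in $\mu$'s block, every component of $yA$ in $\nu$'s block is zero, so $\sum_{i \in \range{t}} (yA)_{\nu t + i} = 0$. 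This contradicts \autoref{lem:single-adversary:same-constant-for-neighbours}, which asserts $\sum_{i \in \range{t}} (yA)_{\mu t + i} = \sum_{i \in \range{t}} (yA)_{\nu t + i}$. Hence no such $y$ exists.

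Finally, I would close the loop with the definitions: by \autoref{def:reconstruction-in-multi-party-summation:the-attack:system-has-partial-solution}, a partial solution to $A$ is precisely a $y$ that solves some $\theta_i$, i.e.\ a $y$ for which $yA$ has exactly one non-zero value; since we have just shown no such $y$ exists, $A$ admits no partial solutions.

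I do not expect a serious obstacle here, as the lemma does the real work. The only point requiring care is the bookkeeping that the single non-zero entry lies entirely within one neighbour's block, so that the two per-neighbour sums genuinely evaluate to a non-zero and a zero value respectively; the role of the hypothesis $n \geq 2$ is exactly to guarantee a distinct neighbour $\nu$ whose block is forced to be all zeros, and I would make sure this usage is explicit.
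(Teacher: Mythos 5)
Your argument is correct and is exactly the intended derivation: the paper states this corollary without an explicit proof, as an immediate consequence of \autoref{lem:single-adversary:same-constant-for-neighbours}, and your reconstruction (the unique non-zero entry forces one neighbour's block-sum to be non-zero while $n \geq 2$ guarantees another neighbour whose block-sum is zero, contradicting the lemma) is precisely the reasoning the paper leaves implicit.
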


\subsection{Privacy in Static Graphs with Unbounded Collusion}\label{subsec:resistance-by-girth:acyclic-graphs}
The special case of $k = 1$ provides some insights into the workings of the reconstruction attack, but not allowing any
collusion is not realistic, as honest-but-curious collusion in the form of secretly exchanging information is
undetectable and there are no strong incentives against it.
Therefore, we now proceed to consider the general case of $k \geq 1$.

Partial solutions are linear combinations of the rows of the adversarial knowledge such that all but one column cancels
out, as in \autoref{eq:reconstruction-in-multi-party-summation:introduction:solvable-system-observations}.
We already know from \autoref{cor:single-adversary:not-exactly-one-non-zero} that a partial solution requires multiple
adversaries.
If two rows in the adversarial knowledge from different adversaries match in multiple columns, then these adversaries
share multiple neighbours, and the graph has a cycle.
Otherwise, if no two rows from different adversaries overlap in multiple columns, then, since each equation has at least
two non-zero columns, each equation introduces new unknowns, taking the adversaries further from a partial solution.
In this case, if the adversaries are able to find a partial solution, they must have another row that cancels out the
unknowns of multiple other rows;
but this, too, introduces a cycle.
The intuition thus seems to be that partial solutions require a cyclic graph.
We now formally prove that this intuition is correct.
\begin{figure*}[hbt]
    \centering

    \begin{subfigure}[t]{.47\linewidth}
        \centering
        \begin{tikzpicture}
        [node/.style={draw, circle}, adversary/.style={fill=pink}]
            \node[node, adversary                   ] (C1) {$C_1$};
            \node[node,                  right=of C1] (N1) {$N_1$};
            \node[node, adversary, below right=of N1] (C2) {$C_2$};
            \node[node,             below left=of C2] (N2) {$N_2$};
            \node[node, adversary,        left=of N2] (C3) {$C_3$};
            \node[node,             above left=of C3] (N3) {$N_3$};
            \node[node, adversary,       right=of N1] (C4) {$C_4$};
            \node[node,                  right=of C4] (N4) {$N_4$};

            \draw (C1) -- (N1);
            \draw (N1) -- (C2);
            \draw (C2) -- (N2);
            \draw (N2) -- (C3);
            \draw (C3) -- (N3);
            \draw (N3) -- (C1);
            \draw (N1) -- (C4);
            \draw (C4) -- (N4);
        \end{tikzpicture}
        \caption{
            A graph~$G$ featuring adversaries~$C = \{ C_1, C_2, C_3, C_4 \}$ and non-adversaries
            $N = \{ N_1, N_2, N_3, N_4 \}$.
        }
        \Description{
            A simple graph featuring a cycle with one bit sticking out.
            The bit that sticks out is a short chain of nodes.
            The cycle features three adversaries, C1, C2, and C3, and three non-adversaries, N1, N2, and N3, in
            alternating order.
            The bit that sticks out of the cycle is a chain of users, consisting in order of users N1, C4, and N4.
        }
        \label{fig:resistance-by-girth:acyclic-graphs:example-graph:before}
    \end{subfigure}
    \hfil
    \begin{subfigure}[t]{.47\linewidth}
        \setcounter{subfigure}{3}
        \centering
        \begin{tikzpicture}
        [node/.style={draw, circle}, adversary/.style={fill=pink}]
            \node[node, adversary                   ] (C1) {$C_1$};
            \node[node,                  right=of C1] (N1) {$N_1$};
            \node[node, adversary, below right=of N1] (C2) {$C_2$};
            \node[node,             below left=of C2] (N2) {$N_2$};
            \node[node, adversary,        left=of N2] (C3) {$C_3$};
            \node[node,             above left=of C3] (N3) {$N_3$};
            \node[node, adversary,       right=of N1] (C4) {$C_4$};

            \draw (C1) -- (N1);
            \draw (N1) -- (C2);
            \draw (C2) -- (N2);
            \draw (N2) -- (C3);
            \draw (C3) -- (N3);
            \draw (N3) -- (C1);
        \end{tikzpicture}
        \caption{The bipartite graph~$H$ corresponding to biadjacency matrix~$A''$.}
        \Description{
            This graph is identical to graph G, except that user N4 is gone, and the edge between N1 and C4 is removed.
            The cycle thus remains intact.
        }
        \label{fig:resistance-by-girth:acyclic-graphs:example-graph:after}
    \end{subfigure}

    \vspace{9mm}  
    \begin{subfigure}{\linewidth}
        \setcounter{subfigure}{1}
        \begin{align*}
            &
            A =
            \begin{bNiceArray}{ccccc|ccccc|ccccc|ccccc}[margin]
                1 & 0 & 0 & 0 & 0  &  0 & 0 & 0 & 0 & 0  &  1 & 0 & 0 & 0 & 0  &  0 & 0 & 0 & 0 & 0 \\
                1 & 0 & 0 & 0 & 0  &  1 & 0 & 0 & 0 & 0  &  0 & 0 & 0 & 0 & 0  &  0 & 0 & 0 & 0 & 0 \\
                0 & 0 & 0 & 0 & 0  &  1 & 0 & 0 & 0 & 0  &  1 & 0 & 0 & 0 & 0  &  0 & 0 & 0 & 0 & 0 \\
                0 & 0 & 0 & 0 & 0  &  1 & 0 & 0 & 0 & 0  &  0 & 1 & 0 & 0 & 0  &  0 & 0 & 0 & 0 & 0 \\
                1 & 0 & 0 & 0 & 0  &  0 & 0 & 0 & 0 & 0  &  0 & 0 & 0 & 0 & 0  &  1 & 0 & 0 & 0 & 0
                \CodeAfter
                \OverBrace[shorten, yshift=3pt]{1-1}{5-5}{N_1}
                \OverBrace[shorten, yshift=3pt]{1-6}{5-10}{N_2}
                \OverBrace[shorten, yshift=3pt]{1-11}{5-15}{N_3}
                \OverBrace[shorten, yshift=3pt]{1-16}{5-20}{N_4}
            \end{bNiceArray},
            &&
            A' =
            \begin{bNiceArray}{cccc}[margin]
                1 & 0 & 1 & 0 \\
                1 & 1 & 0 & 0 \\
                0 & 1 & 1 & 0 \\
                0 & 1 & 1 & 0 \\
                1 & 0 & 0 & 1
            \end{bNiceArray},
            &&
            A'' =
            \begin{bNiceArray}{cccc}[margin]
                1 & 0 & 1 & 0 \\
                1 & 1 & 0 & 0 \\
                0 & 1 & 1 & 0
            \end{bNiceArray}
        \end{align*}

        \caption{
            The adversarial knowledge~$A$ after users from
            \autoref{fig:resistance-by-girth:acyclic-graphs:example-graph:before} run in the
            sequence~$(C_1, C_2, C_3, N_3, C_3, C_4)$; the matrix~$A'$ with collapsed columns; and the matrix~$A''$
            without duplicate and unused rows.
        }
        \label{fig:resistance-by-girth:acyclic-graphs:example:knowledge}
    \end{subfigure}

    \begin{subfigure}{\linewidth}
        \begin{align*}
            y =
            \begin{bNiceArray}{ccccc}[margin]
                1 & 1 & -1 & 0 & 0
            \end{bNiceArray},
            &&
            y' =
            \begin{bNiceArray}{ccccc}[margin]
                1 & 1 & -1 & 0 & 0
            \end{bNiceArray},
            &&
            y'' =
            \begin{bNiceArray}{ccc}[margin]
                1 & 1 & -1
            \end{bNiceArray}
        \end{align*}

        \caption{Partial solutions respectively of~$A$, $A'$, and~$A''$.}
        \label{fig:resistance-by-girth:acyclic-graphs:example:solutions}
    \end{subfigure}

    \caption{
        Example transformation of graph and adversarial knowledge as seen in the proof of
        \autoref{thm:resistance-by-girth:acyclic-graphs:acyclic-is-no-solution}.
    }
    \label{fig:resistance-by-girth:acyclic-graphs:example}
\end{figure*}
\begin{theorem}
    \label{thm:resistance-by-girth:acyclic-graphs:acyclic-is-no-solution}
    Let~$G = (V_G, E_G)$ be an undirected graph, let~$C \subseteq V_G$ be the set of adversaries,
    let~$k \coloneq \abs{C}$, let~$n \coloneq \abs{N_G(C)}$, let~$t$ be the number of summations performed by the
    adversaries~$C$, and let~$A \in \mathbb{R}^{t \times nt}$ be the adversarial knowledge.

    If~$G$ is acyclic, then~$A$ does not have partial solutions.
\end{theorem}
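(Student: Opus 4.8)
The plan is to strip the full adversarial-knowledge matrix~$A$ down to a plain biadjacency matrix and then exploit acyclicity through a leaf-counting argument; this mirrors the transformation $A \to A' \to A''$ shown in \autoref{fig:resistance-by-girth:acyclic-graphs:example}. Suppose, for contradiction, that~$y$ is a partial solution of~$A$, so $yA$ has exactly one non-zero entry, at index $i = \nu^* t + i^*$. First I would collapse the $t$~time-columns of each neighbour. By \autoref{pro:reconstruction-in-multi-party-summation:obtained-adversarial-knowledge:sum-iff-neighbour}, every row of~$A$ touches a neighbour~$\nu$ in exactly one of its $t$~columns, so $\sum_{j \in \range{t}} (yA)_{\nu t + j} = (yA')_\nu$, where $A'_{\tau, \nu} = 1$ exactly when~$\nu$ is a neighbour of the adversary responsible for row~$\tau$. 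This is the same computation as in \autoref{lem:single-adversary:same-constant-for-neighbours}. Because only the entry at~$i$ survives in~$yA$, the vector~$yA'$ has exactly one non-zero entry, at column~$\nu^*$, so $y$ is also a partial solution of~$A'$.

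Next I would remove redundant rows. Merging the coefficients of identical rows of~$A'$ and discarding zero rows yields a~$y''$ with $y''A'' = yA'$, where $A''$ retains only the distinct non-zero rows. By \autoref{pro:reconstruction-in-multi-party-summation:obtained-adversarial-knowledge:sum-iff-neighbour}, each such row is the neighbourhood indicator of some adversary; two distinct adversaries producing the same row would share at least two neighbours and hence close a 4-cycle, which acyclicity forbids. Thus $A''$ is exactly the biadjacency matrix of the bipartite graph~$H$ obtained by restricting~$G$ to the participating adversaries and their neighbours. As a subgraph of an acyclic graph, $H$ is a forest, and since the adversarial model grants every adversary either zero or at least two neighbours, every row of~$A''$ has at least two ones.

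The crux is to show that such an $A'' = M$ admits no partial solution, and here I would argue via leaves. Let~$U'$ be the support of~$y''$, and assume $U' \neq \emptyset$ (otherwise $y''M = 0$ has no non-zero entry, hence is not a partial solution). Consider the subforest~$H'$ of~$H$ induced on~$U'$ together with all of their neighbours. Every adversary in~$U'$ keeps all of its (at least two) neighbours in~$H'$, so no adversary is a leaf of~$H'$; every leaf therefore lies on the neighbour side. Each tree component of~$H'$ that meets~$U'$ contains an adversary with its at least two neighbours, hence has at least three vertices and therefore at least two leaves, all of them neighbours; and for a leaf-neighbour~$w$ whose unique surviving neighbour is~$u \in U'$ we get $(y''M)_w = y''_u \neq 0$. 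Consequently $y''M$ has at least two non-zero entries, contradicting that it is a partial solution, and the theorem follows.

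I expect this third step to be the main obstacle, since it is where \emph{acyclic} must be converted into a statement about linear combinations. The unlocking observation is that passing to the support~$U'$ cannot demote a degree-$\geq 2$ adversary to a leaf, which pins all leaves to the neighbour side; as every non-trivial tree has at least two leaves, any non-trivial combination of rows necessarily leaves at least two columns uncancelled. By contrast, the two reduction steps are mostly bookkeeping built on \autoref{pro:reconstruction-in-multi-party-summation:obtained-adversarial-knowledge:single-private-value-per-equation} and \autoref{pro:reconstruction-in-multi-party-summation:obtained-adversarial-knowledge:sum-iff-neighbour}, together with the observation that collapsing and deduplicating rows only ever preserve, never create, partial solutions.
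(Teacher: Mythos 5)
Your proposal is correct and follows essentially the same route as the paper's proof: the same column-collapsing and row-deduplication reductions $A \to A' \to A''$, followed by the same key observation that in an acyclic bipartite subgraph the (at least two) leaves must lie on the neighbour side because adversaries have degree $0$ or $\geq 2$, forcing at least two uncancelled columns. Your restriction to the support of~$y''$ plays exactly the role of the paper's removal of \enquote{unused} rows, so the arguments coincide.
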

\begin{proof}
    We give a proof by contraposition:
    Given a partial solution to~$A$, we show that~$G$ is cyclic.
    Let~$y \in \mathbb{R}^{1 \times t}$ be a partial solution to~$A$.
    We show how to find a bipartite subgraph~$H$ of~$G$ such that its biadjacency matrix~$A''$ has a partial
    solution~$y''$.
    We then show that this implies the existence of a cycle in~$G$.
    Our proof works in multiple steps:
    \begin{enumerate*}[label=(\arabic*)]
        \item combine columns of~$A$ to create~$A'$,
        \item remove rows from~$A'$ to create~$A''$,
        \item create the corresponding partial solution~$y''$, and finally
        \item show that $G$~is cyclic.
    \end{enumerate*}
    We show an example of this procedure in \autoref{fig:resistance-by-girth:acyclic-graphs:example}.

    \begin{enumerate}
        \item
        \textit{Combine columns.}
        We merge the $t$~columns in~$A$ assigned to each neighbour to obtain~$A'$.
        Let~$y' = y$, and let~$A' \in \mathbb{R}^{t \times n}$ such that
        \begin{equation}
            \label{eq:resistance-by-girth:acyclic-graphs:merge-columns}
            \forall \tau \in \range{t}, \nu \in \range{n} :
            A'_{\tau, \nu} \coloneq \sum_{i \in \range{t}} A_{\tau, \nu t + i}.
        \end{equation}
        It follows from
        \autoref{pro:reconstruction-in-multi-party-summation:obtained-adversarial-knowledge:single-private-value-per-equation}
        that this is a binary matrix, and it follows from
        \autoref{pro:reconstruction-in-multi-party-summation:obtained-adversarial-knowledge:sum-iff-neighbour} that no
        neighbour relations are removed.
        Furthermore, observe that
        \begin{equation}
            \label{eq:resistance-by-girth:acyclic-graphs:merge-columns-keeps-solution}
            \forall \nu \in \range{n} : (y'A')_\nu = \sum_{i \in \range{t}} (yA)_{\nu t + i}.
        \end{equation}
        Since~$yA$ contains exactly one non-zero value, so does~$y'A'$.
        Therefore, $y'$~is a partial solution to~$A'$.

        \item
        \textit{Remove rows.}
        We remove duplicate and unused rows from~$A'$ to obtain~$A''$.
        We define~$A''$ as a set of rows:
        \begin{align}
            A'' \coloneq \{ A'_i \mid
            \ &i \in \range{t}\ \land \\
            \label{eq:resistance-by-girth:acyclic-graphs:remove-rows-a-unique}
            &\nexists j \in \range{i} : A'_i = A'_j\ \land \\
            \label{eq:resistance-by-girth:acyclic-graphs:remove-rows-a-used}
            &{\scriptstyle\sum} \{ y'_j \mid j \in \range{t} \land A'_i = A'_j \} \neq 0 \}.
        \end{align}
        Here, \autoref{eq:resistance-by-girth:acyclic-graphs:remove-rows-a-unique} excludes duplicates by only choosing
        row~$A'_i$ if there is no~$j < i$ such that~$A'_i = A'_j$, and
        \autoref{eq:resistance-by-girth:acyclic-graphs:remove-rows-a-used} excludes unused rows by only picking
        row~$A'_i$ if the sum of~$y'_j$ over all identical rows~$A'_j$ is non-zero.

        \item
        \textit{Create partial solution.}
        We similarly combine and remove the corresponding columns from~$y'$ to obtain~$y''$.
        To do so, we define a function~$\phi$ that describes how the rows of~$A''$ relate to the rows of~$A'$.
        Let~$s$ be the number of rows in~$A''$.
        Then we define~$\phi : \range{s} \rightarrow \range{t}^\ast$ such that
        \begin{equation}
            \label{eq:resistance-by-girth:acyclic-graphs:phi}
            \forall \tau \in \range{t}, \sigma \in \range{s} :
            \tau \in \phi(\sigma) \Leftrightarrow A'_\tau = A''_\sigma.
        \end{equation}
        Using this function, we define~$y'' \in \mathbb{R}^{1 \times s}$ as
        \begin{equation}
            \label{eq:resistance-by-girth:acyclic-graphs:remove-rows-y}
            \forall \sigma \in \range{s} : y''_\sigma \coloneq \sum_{\tau \in \phi(\sigma)} y'_\tau.
        \end{equation}
        It follows that
        \begin{align}
            \label{eq:resistance-by-girth:acyclic-graphs:remove-rows-keeps-solution}
            \forall \nu \in \range{n} : (y''A'')_\nu
            &= \sum_{\sigma \in \range{s}} (y''_\sigma A''_{\sigma, \nu}) \\
            &= \sum_{\sigma \in \range{s}} \sum_{\tau \in \phi(\sigma)} (y'_\tau A''_{\sigma, \nu}) \\
            &= \sum_{\sigma \in \range{s}} \sum_{\tau \in \phi(\sigma)} (y'_\tau A'_{\tau, \nu}) \\
            &= \sum_{\tau \in \range{t}} (y'_\tau A'_{\tau, \nu}) \\
            &= (y'A')_\nu.
        \end{align}
        Therefore, $y''A'' = y'A'$, and $y''$ is a partial solution to~$A''$.

        \item
        \textit{Find cycle.}
        Note that $A''$ is the biadjacency matrix of some bipartite subgraph~$H = (C', N_G(C), E_H)$ of $G$,
        where $C' \subseteq C$ and $E_H \subseteq E_G$.
        Assume, for the sake of contradiction, that $H$ is acyclic.
        Then $H$ has two distinct nodes~$i, j$ with degree one.
        Since adversaries cannot have degree one in~$G$, and
        $\forall c \in C' : \left( N_H(c) = N_G(c) \lor N_H(c) = \emptyset \right)$, we know that $i, j \in N_G(C)$.
        Consequently, the columns in~$A''$ for~$i, j$ must each contain only one non-zero value, and $y''$ does not
        contain zeroes at all by \autoref{eq:resistance-by-girth:acyclic-graphs:remove-rows-a-used}.
        Therefore, $(y'' A'')_i \neq 0$ and~$(y'' A'')_j \neq 0$.
        However, this implies that $y'' A''$ has multiple non-zero values, which contradicts the earlier observation
        that~$y''$ is a partial solution to~$A''$.
        Therefore, $H$~is cyclic, and so is~$G$.
    \end{enumerate}
\end{proof}

Our proof shows that partial solutions imply the existence of cycles.
However, this does not mean that cycles imply the existence of partial solutions.
Indeed, we show in \autoref{subsec:resistance-by-girth:bounded-collusion} that structured cycles can be introduced
without creating partial solutions.
\begin{remark}
    \autoref{thm:resistance-by-girth:acyclic-graphs:acyclic-is-no-solution} pertains only to \textit{partial} solutions.
    Even in an acyclic topology, there may be linear relations that reveal sensitive information without leaking private
    values outright, such as~$\theta_1 = \theta_2$ or~$\theta_3 = 4 \times \theta_5$.
    Protecting these relations is left for future work.
\end{remark}

\subsection{Privacy in Static Graphs with Bounded Collusion}\label{subsec:resistance-by-girth:bounded-collusion}
While acyclic graphs resist reconstruction attacks, these graphs are not well-suited for peer-to-peer networks for two
reasons.
Firstly, if any non-leaf node becomes unavailable, the network becomes disconnected.
Secondly, leaf nodes have only one neighbour, and thus cannot initiate summations to learn from their neighbours.

We show that no partial solutions exist given an upper bound on the number of adversaries.
This bound depends on the graph's girth, which is the length of its shortest cycle.
\begin{theorem}
    \label{thm:resistance-by-girth:bounded-collusion:girth-determines-bound}
    Let~$G = (V_G, E_G)$ be an undirected graph, let~$C \subseteq V_G$ be a set of $k$~adversaries,
    let~$n \coloneq \abs{N_G(C)}$, let~$t$ be the number of summations performed by~$C$, and
    let~$A \in \mathbb{R}^{t \times nt}$ be the adversarial knowledge.

    If~$\girth(G) > 2k$, then~$A$ does not have partial solutions.
\end{theorem}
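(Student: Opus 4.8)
The plan is to prove the contrapositive, reusing almost verbatim the reduction from the proof of \autoref{thm:resistance-by-girth:acyclic-graphs:acyclic-is-no-solution}. Suppose $A$ admits a partial solution. Steps (1)--(3) of that proof produce a bipartite subgraph $H = (C', N_G(C), E_H)$ of $G$, with $C' \subseteq C$ and biadjacency matrix $A''$, together with a partial solution $y''$ of $A''$ all of whose entries are non-zero; step (4) then shows that $H$ is cyclic. Everything up to this point carries over unchanged, so the only new work is to bound the \emph{length} of $H$'s shortest cycle, rather than merely establishing that one exists.

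For that bound I would use two facts. First, because $C' \subseteq C$, the adversary side of $H$ has at most $k$ vertices. Second, $H$ is bipartite, so every simple cycle has even length $2m$ and alternates strictly between the two sides, visiting $m$ distinct vertices on each. A length-$2m$ cycle of $H$ therefore uses $m$ distinct adversaries, forcing $m \leq \abs{C'} \leq k$; equivalently, every cycle of $H$ has length at most $2k$.

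Since $H$ is cyclic it contains a cycle, necessarily of length at most $2k$, and as $H$ is a subgraph of $G$ this same cycle lies in $G$. Hence $\girth(G) \leq 2k$, contradicting the hypothesis $\girth(G) > 2k$. This establishes the contrapositive and therefore the theorem.

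I expect the argument to be short, with the main difficulty being conceptual framing rather than computation: once one notices that the hard analytic work (the column-merge, row-removal, and leaf-counting steps) is already discharged by the preceding theorem, the remaining step is the elementary observation that a bipartite graph with at most $k$ vertices on its adversary side can only host cycles of length up to $2k$. The one place to be careful is ensuring that the cycle's length is preserved when passing from $H$ back to $G$; this is immediate from $H$ being a subgraph of $G$, but it is worth stating explicitly, since the theorem is phrased in terms of $\girth(G)$ while the combinatorics happen in $H$.
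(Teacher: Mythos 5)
Your proposal is correct and follows essentially the same route as the paper: both argue by contraposition, reuse the bipartite subgraph $H$ constructed in the proof of \autoref{thm:resistance-by-girth:acyclic-graphs:acyclic-is-no-solution}, and bound the cycle length by $2k$ using bipartiteness and the fact that the adversary side has at most $k$ vertices. Your write-up merely spells out the alternation argument a little more explicitly than the paper does.
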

\begin{proof}
    We give a proof by contraposition:
    Given a partial solution to~$A$, we show that~$\girth(G) \leq 2k$.
    Let~$H$ be as in the proof of \autoref{thm:resistance-by-girth:acyclic-graphs:acyclic-is-no-solution}.
    Then $H$~is cyclic.
    Since~$H$ is bipartite, every edge in the cycle is between an adversary and a neighbour.
    Since each node in the cycle is visited at most once, the cycle length is at most~$2k$.
    This cycle also exists in~$G$.
    Therefore, $\girth(G) \leq 2k$.
\end{proof}

\subsection{Privacy in Dynamic Graphs}\label{subsec:resistance-by-girth:dynamic-edges}
So far, we have assumed that graphs are static.
However, this prevents users from changing their neighbours, which is unrealistic if users move through the network.
We briefly show that dynamic graphs can be reduced to static graphs.

If a single user performs two summations over two sets of neighbours, they learn exactly the same information as two
users would over those same sets of neighbours.
We show an example in \autoref{fig:resistance-by-girth:dynamic-edges:example}.
\begin{figure}[htb]
    \centering

    \begin{subfigure}{.49\linewidth}
        \centering
        \begin{tikzpicture}
[node/.style={draw, circle}]
    \node[node,]             (U)  {$U$};
    \node[node, below=of U]  (N2) {$N_2$};
    \node[node, left =of N2] (N1) {$N_1$};
    \node[node, right=of N2] (N3) {$N_3$};

    \draw (U) -- (N1);
    \draw (U) -- (N2);
    \draw (U) -- (N3) [dashed];
\end{tikzpicture}
        \caption{A dynamic graph. The dotted edge is not present in all rounds.}
        \Description{
            An undirected graph with four nodes, U, N1, N2, and N3.
            Node U is connected to the three other nodes.
            The edge between U and N3 is dotted.
            The other three nodes are not connected to each other.
        }
        \label{fig:resistance-by-girth:dynamic-edges:example:dynamic}
    \end{subfigure}

    \begin{subfigure}{.49\linewidth}
        \centering
        \begin{tikzpicture}
[node/.style={draw, circle}]
    \node[node,]             (U1) {$U_1$};
    \node[node, right=of U1] (U2) {$U_2$};
    \node[node, below=of U1] (N1) {$N_1$};
    \node[node, right=of N1] (N2) {$N_2$};
    \node[node, right=of N2] (N3) {$N_3$};

    \draw (U1) -- (N1);
    \draw (U1) -- (N2);
    \draw (U2) -- (N1);
    \draw (U2) -- (N2);
    \draw (U2) -- (N3);

    \node[draw, dashed, fit=(U1) (U2)] {};
\end{tikzpicture}
        \caption{A reduction to a static graph. $U$ has been split into~$U_1$ and~$U_2$.}
        \Description{
            An undirected graph with five nodes, U1, U2, N1, N2, and N3.
            Node U1 is connected to nodes N1 and N2.
            Node U2 is connected to nodes N1, N2, and N3.
            Nodes N1, N2, and N3 are not connected to each other.
            A dotted box is drawn around U1 and U2.
        }
        \label{fig:resistance-by-girth:dynamic-edges:example:static}
    \end{subfigure}

    \caption{
        Example of how a dynamic graph can be reduced to a static graph.
        $U$~learns the same as~$U_1$ and~$U_2$ together.
    }
    \label{fig:resistance-by-girth:dynamic-edges:example}
\end{figure}
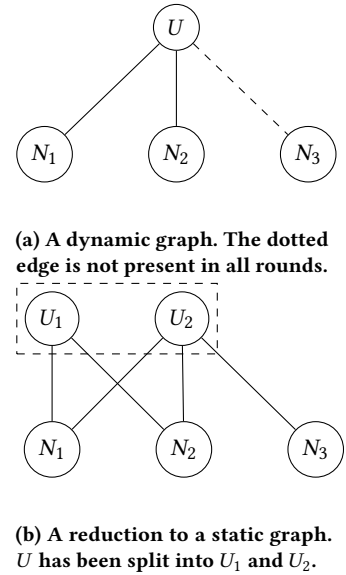
More generally, $k$~users with static neighbours can learn the exact same information as $\ell$~users with $k$~different
sets of neighbours.
Our results on reconstruction feasibility in static graphs from
\autoref{subsec:reconstruction-in-multi-party-summation:feasibility} can be translated similarly to dynamic graphs.

We conclude that \autoref{thm:resistance-by-girth:bounded-collusion:girth-determines-bound} implies the following.
\begin{corollary}
    Let~$G = (V_G, E_G)$ be a dynamic undirected graph, let~$C \subseteq V_G$ be a set of adversaries,
    let~$n \coloneq \abs{N_G(C)}$, let~$t$ be the number of summations performed by~$C$, let~$k$ be the number of sets
    of neighbours the adversaries sum over, and let~$A \in \mathbb{R}^{t \times nt}$ be the adversarial knowledge.

    If~$\girth(G) > 2k$, then~$A$ does not have partial solutions.
\end{corollary}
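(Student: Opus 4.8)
The plan is to reduce the dynamic setting to the static setting already handled by \autoref{thm:resistance-by-girth:bounded-collusion:girth-determines-bound}, exploiting the observation (illustrated in \autoref{fig:resistance-by-girth:dynamic-edges:example}) that a single user summing over several different neighbourhoods is indistinguishable, from the point of view of the adversarial knowledge, from several users each summing over one fixed neighbourhood. Concretely, I would construct a static graph~$G'$ from~$G$ by splitting every adversary $c \in C$ into one copy per distinct neighbour-set that $c$ sums over across all $t$~summations, connecting each copy to exactly the neighbours of that set. Since the adversaries collectively use $k$~distinct neighbour-sets, $G'$ then contains exactly $k$~static adversaries.

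The first key step is to argue that this reduction preserves the adversarial knowledge. By \autoref{pro:reconstruction-in-multi-party-summation:obtained-adversarial-knowledge:sum-iff-neighbour}, each row of~$A$ is determined entirely by the set of neighbours included in the corresponding summation, and not by which physical adversary performed it; hence the matrix~$A$ obtained in~$G$ is identical, up to a relabelling of which adversary owns which row, to the adversarial knowledge obtained when the $k$~static adversaries of~$G'$ perform the same $t$~summations. Consequently, $A$ admits a partial solution in the dynamic setting if and only if it admits one in the static graph~$G'$.

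With the reduction in place, the conclusion follows by applying \autoref{thm:resistance-by-girth:bounded-collusion:girth-determines-bound} to~$G'$: because $G'$ has exactly $k$~adversaries, $\girth(G') > 2k$ guarantees that $A$ has no partial solutions. It then remains only to connect the girth hypothesis on the dynamic graph to the girth of the reduced graph~$G'$.

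I expect this last link to be the main obstacle, because vertex-splitting is not girth-preserving: splitting one adversary into several copies that share two or more neighbours creates new short cycles that are absent from the raw dynamic graph. Indeed, in \autoref{fig:resistance-by-girth:dynamic-edges:example} the acyclic dynamic star acquires a $4$-cycle once~$U$ is split into~$U_1$ and~$U_2$. Thus the girth relevant to the bound is that of the reduced static graph~$G'$, in which the cyclic bipartite subgraph~$H$ of the proof of \autoref{thm:resistance-by-girth:acyclic-graphs:acyclic-is-no-solution} actually resides; accordingly, $\girth(G)$ in the statement must be understood as the girth of this reduction rather than of the edge-union of the time-varying graph. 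Under this convention the corollary is immediate, so I would make the girth convention for dynamic graphs explicit before invoking the reduction, ensuring that the $2k$~bound transfers verbatim from the static theorem.
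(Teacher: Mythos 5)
Your proof takes essentially the same route as the paper: the paper gives no formal argument for this corollary, only the node-splitting reduction of \autoref{fig:resistance-by-girth:dynamic-edges:example} (one static copy of each adversary per distinct neighbour-set, so that the adversarial knowledge matrix is unchanged by \autoref{pro:reconstruction-in-multi-party-summation:obtained-adversarial-knowledge:sum-iff-neighbour}) followed by an appeal to \autoref{thm:resistance-by-girth:bounded-collusion:girth-determines-bound} with $k$ reinterpreted as the number of neighbour-sets. Your closing caveat is correct and actually sharper than the paper's own treatment: splitting creates cycles absent from the time-varying graph, and in \autoref{fig:resistance-by-girth:dynamic-edges:example} itself the collapsed rows $(1,1,0)$ and $(1,1,1)$ already yield the partial solution $(1,1,1)-(1,1,0)=(0,0,1)$ for $\theta_{N_3}$ even though every snapshot (and the edge-union) of the dynamic star is acyclic with $k=2$, so the hypothesis $\girth(G)>2k$ is only true as stated when read as the girth of the reduced static graph --- a convention the paper leaves implicit and that your proof rightly makes explicit.
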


There are several important limitations to this result.
Firstly, the upper bound on the number of adversaries depends on the girth, but the girth may not be known beforehand if
users move through the network in unpredictable ways.
Secondly, even if a minimum girth is guaranteed throughout the protocol, the upper bound implies a maximum number of
changes that may occur during the protocol.

\subsection{Impact on Convergence}\label{subsec:resistance-by-girth:impact-on-convergence}
We briefly evaluate the impact of increasing the network's girth on the convergence of a protocol running over that
network.
Specifically, we numerically simulate a distributed averaging protocol~\cite{Boyd2004}, which is just a
non-privacy-preserving form of distributed learning.
We intentionally choose a simple, efficient, non-noisy protocol to make the impact of the girth parameter most apparent.
The \enquote{numerical simulation} part of the description is because we do not actually create separate processes and
communication for the nodes.
Our source code is publicly available~\cite{Dekker2023}.

We use the system model presented in \autoref{subsec:preliminaries:assumptions-and-notation}.
We create a network by generating a random Erdős--Rényi graph with 50~nodes and with each edge having a probability~$p$
of being added.
Each node holds a single private scalar value, sampled uniformly from the range~$\{0 \ldots 50 \}$.
Each round, one random node updates their private value to be the unweighted mean of their neighbours' values and their
own value.
We then measure the number of rounds until convergence, and take the mean over 1000~repetitions of this procedure.
We define convergence as the moment at which any two nodes' local values differ by at most 1.
Changing this threshold does not give fundamentally different results.

To measure the effect girth has on convergence, we \enquote{stretch} graphs to a given girth by iteratively removing
random edges from cycles shorter than the desired girth until no such cycles remain.
With 50~nodes, stretching to a girth of~$x$ ensures reconstruction attacks are impossible when less
than~$\sfrac{\sfrac{x}{2}}{50} = x\%$ of users collude.
For example, after stretching the girth to 25, the graph can resist collusions of less than 25\% of users.

\begin{figure}[htb]
    \centering
    \includegraphics[width=\linewidth]{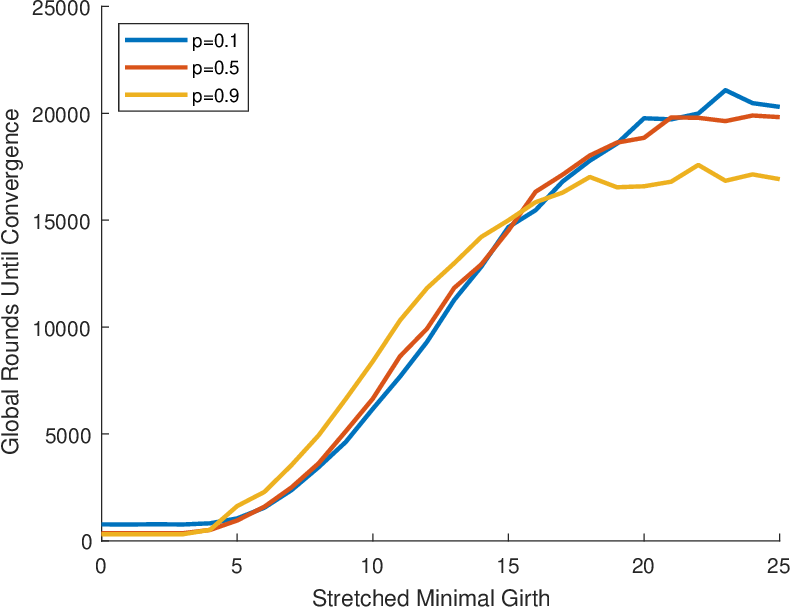}
    \caption{
        Number of rounds until convergence in distributed averaging in random Erdős--Rényi graphs with 50~nodes and
        varying edge probabilities~$p$, as a function of the girth to which the graphs are \enquote{stretched}.
    }
    \Description{
        A line chart with three lines.
        The three lines seperately represent p equals 0.1, p equals 0.5, and p equals 0.9.
        The x-axis shows the stretched minimal girth, ranging from zero to twenty-five.
        The y-axis shows the number of global rounds until convergence, ranging from zero to twenty-five-thousand.

        The line for p equals 0.1 starts at girth zero, requiring two-hundred rounds before the protocol converges.
        Up to girth five, the number of rounds does not change significantly.
        Then, between girth six and fifteen, the line increases linearly to fifteen-thousand rounds.
        Finally, between girth sixteen and twenty-five, the slope decreases, settling at twenty-one-thousand rounds.

        The line for p equals 0.5 is nearly identical, but requires slightly more rounds at the highest girths.

        The line for p equals 0.9 shows the same pattern as the previous two lines, but requires slightly more rounds at
        low girths, and requires slightly fewer rounds at high girths.
        At girth 25, this line requires only seventeen-thousand rounds, while the other two lines require
        twenty-one-thousand rounds.
    }
    \label{fig:resistance-by-girth:impact-on-convergence-erdos-renyi}
\end{figure}
We show our results in \autoref{fig:resistance-by-girth:impact-on-convergence-erdos-renyi}.
Since undirected graphs always have girth at least~3, no significant changes occur at these low girths.
As the girth increases, so does the number of rounds required.
As the girth approaches 25, the slope approaches zero.
Graphs that initially have more edges (as determined by~$p$) require more rounds at low girths, but settle at a lower
number of rounds at high girths.
When we look at our experiments in more detail, we see that ceilings occur once all cycles have been removed, and that
graphs with high~$p$ retain more edges.
This matches the intuition that information propagates more efficiently when there are more edges.

Our results show that increasing girth affects convergence speed significantly.
Though state-of-the-art distributed learning protocols typically require several tens of thousands of
rounds~\cite{Vanhaesebrouck2017, Cheng2018, Cyffers2024}, the magnitude by which increasing girth increases the number
of required rounds may be excessive for some applications.
More sophisticated edge removal methods may ameliorate this issue.
Furthermore, though implementing the cycle removal method from our experiment above as a distributed protocol is
trivial,%
\footnote{A node can break all cycles of at most length~$\ell$ that they are part of as follows.
The node floods a unique random message, paired with a counter starting at~$\ell$, through the network.
Each time a node forwards the message, the counter is decreased.
Once the counter reaches zero, nodes stop forwarding the message.
If (and only if) the source node receives back their own message, they are part of a cycle of length at most~$\ell$, and
remove the edge on which the message came in.}
these methods are not necessarily communicationally efficient.
To the best of our knowledge, there is no research on communication-efficient distributed \enquote{graph stretching}.
That said, there are distributed protocols for measuring the network's girth~\cite{Censor2021} and for removing
\emph{all} cycles~\cite{Dolev2008, Oliva2018}.
We conclude that determining a network's resistance by measuring the girth is feasible in general, but increasing girth
is practical only when communication efficiency is not a concern.

    \section{Conclusion}\label{sec:conclusion}
We investigated reconstruction attacks in the setting of secure multi-party computation.
We observed that existing multi-party computation literature does not consider protocols in which intermediate values
are intentionally exposed by the ideal functionality, and seemingly assumes that protocols are not self-composed when
deployed.
In our investigation, we focused on a peer-to-peer setting with privacy-preserving summation in which users' data
change over time.
In random subgraphs with 18~users, we found that three passive honest-but-curious adversarial users have an
11.0\%~success rate at recovering another user's private data using a reconstruction attack, requiring an average of
8.8~rounds per adversary.
We analysed the structural dependencies of the underlying network graph that permit this attack, and proved that
successful reconstruction attacks correspond to cycles in the network.
More generally, we showed that the length of the graph's shortest cycle determines the minimum number of adversaries
required for the attack.
We conclude that removing short cycles from the network is a feasible countermeasure, albeit with considerable cost
towards the convergence speed of distributed protocols.

Our work sets the first step towards preventing reconstruction in the peer-to-peer setting as seen in multi-party
computation, and opens up multiple questions for future work.
Firstly, and most obviously, though we have found a sufficient criterion to determine reconstruction feasibility,
finding a criterion that is also necessary would allow using some graphs which our criterion currently forbids.
Secondly, our work is limited to a strictly syntactic notion of privacy, and does not protect linear relations between
data, which is required to protect against adaptive adversaries.
Thirdly, though our restriction to the summation operation is already sufficient to analyse decentralised learning, our
work could be extended to cover compositions with other operations, such as multiplication or comparison.
Finally, the addition of differentially private noise may further strengthen the provided level of privacy.

\begin{acks}
    This paper is partly supported by the European Union’s Horizon Europe research and innovation program under grant agreement No. 101094901, the SEPTON project (Security Protection Tools for Networked Medical Devices).
\end{acks}

    \bibliographystyle{ACM-Reference-Format}
    \hbadness=10000\hfuzz=10000pt\bibliography{main}


\begin{thebibliography}{56}


\ifx \showCODEN    \undefined \def \showCODEN     #1{\unskip}     \fi
\ifx \showDOI      \undefined \def \showDOI       #1{#1}\fi
\ifx \showISBNx    \undefined \def \showISBNx     #1{\unskip}     \fi
\ifx \showISBNxiii \undefined \def \showISBNxiii  #1{\unskip}     \fi
\ifx \showISSN     \undefined \def \showISSN      #1{\unskip}     \fi
\ifx \showLCCN     \undefined \def \showLCCN      #1{\unskip}     \fi
\ifx \shownote     \undefined \def \shownote      #1{#1}          \fi
\ifx \showarticletitle \undefined \def \showarticletitle #1{#1}   \fi
\ifx \showURL      \undefined \def \showURL       {\relax}        \fi
\providecommand\bibfield[2]{#2}
\providecommand\bibinfo[2]{#2}
\providecommand\natexlab[1]{#1}
\providecommand\showeprint[2][]{arXiv:#2}

\bibitem[Backes et~al\mbox{.}(2007)]%
        {Backes2007}
\bibfield{author}{\bibinfo{person}{Michael Backes}, \bibinfo{person}{Birgit
  Pfitzmann}, {and} \bibinfo{person}{Michael Waidner}.}
  \bibinfo{year}{2007}\natexlab{}.
\newblock \showarticletitle{The reactive simulatability {(RSIM)} framework for
  asynchronous systems}.
\newblock \bibinfo{journal}{\emph{Inf. Comput.}} \bibinfo{volume}{205},
  \bibinfo{number}{12} (\bibinfo{year}{2007}), \bibinfo{pages}{1685--1720}.
\newblock
\urldef\tempurl%
\url{https://doi.org/10.1016/j.ic.2007.05.002}
\showDOI{\tempurl}


\bibitem[Bell et~al\mbox{.}(2020)]%
        {Bell2020}
\bibfield{author}{\bibinfo{person}{James~Henry Bell},
  \bibinfo{person}{Kallista~A. Bonawitz}, \bibinfo{person}{Adri{\`{a}}
  Gasc{\'{o}}n}, \bibinfo{person}{Tancr{\`{e}}de Lepoint}, {and}
  \bibinfo{person}{Mariana Raykova}.} \bibinfo{year}{2020}\natexlab{}.
\newblock \showarticletitle{Secure Single-Server Aggregation with
  (Poly)Logarithmic Overhead}. In \bibinfo{booktitle}{\emph{{CCS} '20: 2020
  {ACM} {SIGSAC} Conference on Computer and Communications Security, Virtual
  Event, USA, November 9-13, 2020}}, \bibfield{editor}{\bibinfo{person}{Jay
  Ligatti}, \bibinfo{person}{Xinming Ou}, \bibinfo{person}{Jonathan Katz},
  {and} \bibinfo{person}{Giovanni Vigna}} (Eds.). \bibinfo{publisher}{{ACM}},
  \bibinfo{pages}{1253--1269}.
\newblock
\urldef\tempurl%
\url{https://doi.org/10.1145/3372297.3417885}
\showDOI{\tempurl}


\bibitem[Bellet et~al\mbox{.}(2018)]%
        {Bellet2018}
\bibfield{author}{\bibinfo{person}{Aur{\'{e}}lien Bellet},
  \bibinfo{person}{Rachid Guerraoui}, \bibinfo{person}{Mahsa Taziki}, {and}
  \bibinfo{person}{Marc Tommasi}.} \bibinfo{year}{2018}\natexlab{}.
\newblock \showarticletitle{Personalized and Private Peer-to-Peer Machine
  Learning}. In \bibinfo{booktitle}{\emph{International Conference on
  Artificial Intelligence and Statistics, {AISTATS} 2018, 9-11 April 2018,
  Playa Blanca, Lanzarote, Canary Islands, Spain}}
  \emph{(\bibinfo{series}{Proceedings of Machine Learning Research},
  Vol.~\bibinfo{volume}{84})}, \bibfield{editor}{\bibinfo{person}{Amos~J.
  Storkey} {and} \bibinfo{person}{Fernando P{\'{e}}rez{-}Cruz}} (Eds.).
  \bibinfo{publisher}{{PMLR}}, \bibinfo{pages}{473--481}.
\newblock
\urldef\tempurl%
\url{http://proceedings.mlr.press/v84/bellet18a.html}
\showURL{%
\tempurl}


\bibitem[Blum et~al\mbox{.}(2005)]%
        {Blum2005}
\bibfield{author}{\bibinfo{person}{Avrim Blum}, \bibinfo{person}{Cynthia
  Dwork}, \bibinfo{person}{Frank McSherry}, {and} \bibinfo{person}{Kobbi
  Nissim}.} \bibinfo{year}{2005}\natexlab{}.
\newblock \showarticletitle{Practical privacy: the SuLQ framework}. In
  \bibinfo{booktitle}{\emph{Proceedings of the Twenty-fourth {ACM}
  {SIGACT-SIGMOD-SIGART} Symposium on Principles of Database Systems, June
  13-15, 2005, Baltimore, Maryland, {USA}}},
  \bibfield{editor}{\bibinfo{person}{Chen Li}} (Ed.).
  \bibinfo{publisher}{{ACM}}, \bibinfo{pages}{128--138}.
\newblock
\urldef\tempurl%
\url{https://doi.org/10.1145/1065167.1065184}
\showDOI{\tempurl}


\bibitem[Bogdanov et~al\mbox{.}(2014)]%
        {Bogdanov2014}
\bibfield{author}{\bibinfo{person}{Dan Bogdanov}, \bibinfo{person}{Peeter
  Laud}, \bibinfo{person}{Sven Laur}, {and} \bibinfo{person}{Pille Pullonen}.}
  \bibinfo{year}{2014}\natexlab{}.
\newblock \showarticletitle{From Input Private to Universally Composable Secure
  Multi-party Computation Primitives}. In \bibinfo{booktitle}{\emph{{IEEE} 27th
  Computer Security Foundations Symposium, {CSF} 2014, Vienna, Austria, 19-22
  July, 2014}}. \bibinfo{publisher}{{IEEE} Computer Society},
  \bibinfo{pages}{184--198}.
\newblock
\urldef\tempurl%
\url{https://doi.org/10.1109/CSF.2014.21}
\showDOI{\tempurl}


\bibitem[Bonawitz et~al\mbox{.}(2017)]%
        {Bonawitz2017}
\bibfield{author}{\bibinfo{person}{Kallista~A. Bonawitz},
  \bibinfo{person}{Vladimir Ivanov}, \bibinfo{person}{Ben Kreuter},
  \bibinfo{person}{Antonio Marcedone}, \bibinfo{person}{H.~Brendan McMahan},
  \bibinfo{person}{Sarvar Patel}, \bibinfo{person}{Daniel Ramage},
  \bibinfo{person}{Aaron Segal}, {and} \bibinfo{person}{Karn Seth}.}
  \bibinfo{year}{2017}\natexlab{}.
\newblock \showarticletitle{Practical Secure Aggregation for Privacy Preserving
  Machine Learning}.
\newblock \bibinfo{journal}{\emph{{IACR} Cryptol. ePrint Arch.}}
  (\bibinfo{year}{2017}), \bibinfo{pages}{281}.
\newblock
\urldef\tempurl%
\url{http://eprint.iacr.org/2017/281}
\showURL{%
\tempurl}


\bibitem[Boyd et~al\mbox{.}(2006)]%
        {Boyd2006}
\bibfield{author}{\bibinfo{person}{Stephen~P. Boyd}, \bibinfo{person}{Arpita
  Ghosh}, \bibinfo{person}{Balaji Prabhakar}, {and} \bibinfo{person}{Devavrat
  Shah}.} \bibinfo{year}{2006}\natexlab{}.
\newblock \showarticletitle{Randomized gossip algorithms}.
\newblock \bibinfo{journal}{\emph{{IEEE} Trans. Inf. Theory}}
  \bibinfo{volume}{52}, \bibinfo{number}{6} (\bibinfo{year}{2006}),
  \bibinfo{pages}{2508--2530}.
\newblock
\urldef\tempurl%
\url{https://doi.org/10.1109/TIT.2006.874516}
\showDOI{\tempurl}


\bibitem[Canetti(2001)]%
        {Canetti2001}
\bibfield{author}{\bibinfo{person}{Ran Canetti}.}
  \bibinfo{year}{2001}\natexlab{}.
\newblock \showarticletitle{Universally Composable Security: {A} New Paradigm
  for Cryptographic Protocols}. In \bibinfo{booktitle}{\emph{42nd Annual
  Symposium on Foundations of Computer Science, {FOCS} 2001, 14-17 October
  2001, Las Vegas, Nevada, {USA}}}. \bibinfo{publisher}{{IEEE} Computer
  Society}, \bibinfo{pages}{136--145}.
\newblock
\urldef\tempurl%
\url{https://doi.org/10.1109/SFCS.2001.959888}
\showDOI{\tempurl}


\bibitem[Censor{-}Hillel et~al\mbox{.}(2021)]%
        {Censor2021}
\bibfield{author}{\bibinfo{person}{Keren Censor{-}Hillel}, \bibinfo{person}{Orr
  Fischer}, \bibinfo{person}{Tzlil Gonen}, \bibinfo{person}{Fran{\c{c}}ois~Le
  Gall}, \bibinfo{person}{Dean Leitersdorf}, {and} \bibinfo{person}{Rotem
  Oshman}.} \bibinfo{year}{2021}\natexlab{}.
\newblock \showarticletitle{Fast Distributed Algorithms for Girth, Cycles and
  Small Subgraphs}.
\newblock \bibinfo{journal}{\emph{CoRR}}  \bibinfo{volume}{abs/2101.07590}
  (\bibinfo{year}{2021}).
\newblock
\showeprint[arXiv]{2101.07590}
\urldef\tempurl%
\url{https://arxiv.org/abs/2101.07590}
\showURL{%
\tempurl}


\bibitem[Chen et~al\mbox{.}(2018)]%
        {Chen2018}
\bibfield{author}{\bibinfo{person}{Xuhui Chen}, \bibinfo{person}{Jinlong Ji},
  \bibinfo{person}{Changqing Luo}, \bibinfo{person}{Weixian Liao}, {and}
  \bibinfo{person}{Pan Li}.} \bibinfo{year}{2018}\natexlab{}.
\newblock \showarticletitle{When Machine Learning Meets Blockchain: {A}
  Decentralized, Privacy-preserving and Secure Design}. In
  \bibinfo{booktitle}{\emph{{IEEE} International Conference on Big Data {(IEEE}
  BigData 2018), Seattle, WA, USA, December 10-13, 2018}},
  \bibfield{editor}{\bibinfo{person}{Naoki Abe}, \bibinfo{person}{Huan Liu},
  \bibinfo{person}{Calton Pu}, \bibinfo{person}{Xiaohua Hu},
  \bibinfo{person}{Nesreen~K. Ahmed}, \bibinfo{person}{Mu~Qiao},
  \bibinfo{person}{Yang Song}, \bibinfo{person}{Donald Kossmann},
  \bibinfo{person}{Bing Liu}, \bibinfo{person}{Kisung Lee},
  \bibinfo{person}{Jiliang Tang}, \bibinfo{person}{Jingrui He}, {and}
  \bibinfo{person}{Jeffrey~S. Saltz}} (Eds.). \bibinfo{publisher}{{IEEE}},
  \bibinfo{pages}{1178--1187}.
\newblock
\urldef\tempurl%
\url{https://doi.org/10.1109/BigData.2018.8622598}
\showDOI{\tempurl}


\bibitem[Cheng et~al\mbox{.}(2018)]%
        {Cheng2018}
\bibfield{author}{\bibinfo{person}{Hsin{-}Pai Cheng}, \bibinfo{person}{Patrick
  Yu}, \bibinfo{person}{Haojing Hu}, \bibinfo{person}{Feng Yan},
  \bibinfo{person}{Shiyu Li}, \bibinfo{person}{Hai Li}, {and}
  \bibinfo{person}{Yiran Chen}.} \bibinfo{year}{2018}\natexlab{}.
\newblock \showarticletitle{{LEASGD:} an Efficient and Privacy-Preserving
  Decentralized Algorithm for Distributed Learning}.
\newblock \bibinfo{journal}{\emph{CoRR}}  \bibinfo{volume}{abs/1811.11124}
  (\bibinfo{year}{2018}).
\newblock
\showeprint[arXiv]{1811.11124}
\urldef\tempurl%
\url{http://arxiv.org/abs/1811.11124}
\showURL{%
\tempurl}


\bibitem[Chin(1978)]%
        {Chin1978}
\bibfield{author}{\bibinfo{person}{Francis Y.~L. Chin}.}
  \bibinfo{year}{1978}\natexlab{}.
\newblock \showarticletitle{Security in Statistical Databases for Queries with
  Small Counts}.
\newblock \bibinfo{journal}{\emph{{ACM} Trans. Database Syst.}}
  \bibinfo{volume}{3}, \bibinfo{number}{1} (\bibinfo{year}{1978}),
  \bibinfo{pages}{92--104}.
\newblock
\urldef\tempurl%
\url{https://doi.org/10.1145/320241.320250}
\showDOI{\tempurl}


\bibitem[Chin and {\"{O}}zsoyoglu(1982)]%
        {Chin1982}
\bibfield{author}{\bibinfo{person}{Francis Y.~L. Chin} {and}
  \bibinfo{person}{Gultekin {\"{O}}zsoyoglu}.} \bibinfo{year}{1982}\natexlab{}.
\newblock \showarticletitle{Auditing and Inference Control in Statistical
  Databases}.
\newblock \bibinfo{journal}{\emph{{IEEE} Trans. Software Eng.}}
  \bibinfo{volume}{8}, \bibinfo{number}{6} (\bibinfo{year}{1982}),
  \bibinfo{pages}{574--582}.
\newblock
\urldef\tempurl%
\url{https://doi.org/10.1109/TSE.1982.236161}
\showDOI{\tempurl}


\bibitem[Clifton and Tassa(2013)]%
        {Clifton2013}
\bibfield{author}{\bibinfo{person}{Chris Clifton} {and} \bibinfo{person}{Tamir
  Tassa}.} \bibinfo{year}{2013}\natexlab{}.
\newblock \showarticletitle{On Syntactic Anonymity and Differential Privacy}.
\newblock \bibinfo{journal}{\emph{Trans. Data Priv.}} \bibinfo{volume}{6},
  \bibinfo{number}{2} (\bibinfo{year}{2013}), \bibinfo{pages}{161--183}.
\newblock
\urldef\tempurl%
\url{http://www.tdp.cat/issues11/abs.a124a13.php}
\showURL{%
\tempurl}


\bibitem[Cormode et~al\mbox{.}(2013)]%
        {Cormode2013}
\bibfield{author}{\bibinfo{person}{Graham Cormode}, \bibinfo{person}{Cecilia~M.
  Procopiuc}, \bibinfo{person}{Entong Shen}, \bibinfo{person}{Divesh
  Srivastava}, {and} \bibinfo{person}{Ting Yu}.}
  \bibinfo{year}{2013}\natexlab{}.
\newblock \showarticletitle{Empirical privacy and empirical utility of
  anonymized data}. In \bibinfo{booktitle}{\emph{Workshops Proceedings of the
  29th {IEEE} International Conference on Data Engineering, {ICDE} 2013,
  Brisbane, Australia, April 8-12, 2013}},
  \bibfield{editor}{\bibinfo{person}{Chee~Yong Chan}, \bibinfo{person}{Jiaheng
  Lu}, \bibinfo{person}{Kjetil N{\o}rv{\aa}g}, {and} \bibinfo{person}{Egemen
  Tanin}} (Eds.). \bibinfo{publisher}{{IEEE} Computer Society},
  \bibinfo{pages}{77--82}.
\newblock
\urldef\tempurl%
\url{https://doi.org/10.1109/ICDEW.2013.6547431}
\showDOI{\tempurl}


\bibitem[Cyffers et~al\mbox{.}(2024)]%
        {Cyffers2024}
\bibfield{author}{\bibinfo{person}{Edwige Cyffers},
  \bibinfo{person}{Aur{\'{e}}lien Bellet}, {and} \bibinfo{person}{Jalaj
  Upadhyay}.} \bibinfo{year}{2024}\natexlab{}.
\newblock \showarticletitle{Differentially Private Decentralized Learning with
  Random Walks}.
\newblock \bibinfo{journal}{\emph{CoRR}}  \bibinfo{volume}{abs/2402.07471}
  (\bibinfo{year}{2024}).
\newblock
\showeprint[arXiv]{2402.07471}
\urldef\tempurl%
\url{https://arxiv.org/abs/2402.07471}
\showURL{%
\tempurl}


\bibitem[Cyffers et~al\mbox{.}(2022)]%
        {Cyffers2022}
\bibfield{author}{\bibinfo{person}{Edwige Cyffers}, \bibinfo{person}{Mathieu
  Even}, \bibinfo{person}{Aur{\'{e}}lien Bellet}, {and}
  \bibinfo{person}{Laurent Massouli{\'{e}}}.} \bibinfo{year}{2022}\natexlab{}.
\newblock \showarticletitle{Muffliato: Peer-to-Peer Privacy Amplification for
  Decentralized Optimization and Averaging}. In
  \bibinfo{booktitle}{\emph{NeurIPS}}.
\newblock
\urldef\tempurl%
\url{https://proceedings.neurips.cc/paper/2022/hash/65d32185f73cbf4535449a792c63926f-Abstract-Conference.html}
\showURL{%
\tempurl}


\bibitem[da~Silva et~al\mbox{.}(2004)]%
        {DaSilva2004}
\bibfield{author}{\bibinfo{person}{Josenildo~Costa da Silva},
  \bibinfo{person}{Matthias Klusch}, \bibinfo{person}{Stefano Lodi}, {and}
  \bibinfo{person}{Gianluca Moro}.} \bibinfo{year}{2004}\natexlab{}.
\newblock \showarticletitle{Inference Attacks in Peer-to-Peer Homogeneous
  Distributed Data Mining}. In \bibinfo{booktitle}{\emph{Proceedings of the
  16th Eureopean Conference on Artificial Intelligence, ECAI'2004, including
  Prestigious Applicants of Intelligent Systems, {PAIS} 2004, Valencia, Spain,
  August 22-27, 2004}},
  \bibfield{editor}{\bibinfo{person}{Ram{\'{o}}n~L{\'{o}}pez de~M{\'{a}}ntaras}
  {and} \bibinfo{person}{Lorenza Saitta}} (Eds.). \bibinfo{publisher}{{IOS}
  Press}, \bibinfo{pages}{450--454}.
\newblock


\bibitem[Danezis(2003)]%
        {Danezis2003}
\bibfield{author}{\bibinfo{person}{George Danezis}.}
  \bibinfo{year}{2003}\natexlab{}.
\newblock \showarticletitle{Statistical Disclosure Attacks: Traffic
  Confirmation in Open Environments}. In \bibinfo{booktitle}{\emph{Security and
  Privacy in the Age of Uncertainty, {IFIP} {TC11} 18\({}^{\mbox{th}}\)
  International Conference on Information Security (SEC2003), May 26-28, 2003,
  Athens, Greece}} \emph{(\bibinfo{series}{{IFIP} Conference Proceedings},
  Vol.~\bibinfo{volume}{250})}, \bibfield{editor}{\bibinfo{person}{Dimitris
  Gritzalis}, \bibinfo{person}{Sabrina De~Capitani di~Vimercati},
  \bibinfo{person}{Pierangela Samarati}, {and} \bibinfo{person}{Sokratis~K.
  Katsikas}} (Eds.). \bibinfo{publisher}{Kluwer}, \bibinfo{pages}{421--426}.
\newblock


\bibitem[Danner et~al\mbox{.}(2018)]%
        {Danner2018}
\bibfield{author}{\bibinfo{person}{G{\'{a}}bor Danner},
  \bibinfo{person}{{\'{A}}rp{\'{a}}d Berta}, \bibinfo{person}{Istv{\'{a}}n
  Heged{\"{u}}s}, {and} \bibinfo{person}{M{\'{a}}rk Jelasity}.}
  \bibinfo{year}{2018}\natexlab{}.
\newblock \showarticletitle{Robust Fully Distributed Minibatch Gradient Descent
  with Privacy Preservation}.
\newblock \bibinfo{journal}{\emph{Secur. Commun. Networks}}
  \bibinfo{volume}{2018} (\bibinfo{year}{2018}),
  \bibinfo{pages}{6728020:1--6728020:15}.
\newblock
\urldef\tempurl%
\url{https://doi.org/10.1155/2018/6728020}
\showDOI{\tempurl}


\bibitem[Dekker and Erkin(2021)]%
        {Dekker2021}
\bibfield{author}{\bibinfo{person}{Florine~W. Dekker} {and}
  \bibinfo{person}{Zekeriya Erkin}.} \bibinfo{year}{2021}\natexlab{}.
\newblock \showarticletitle{Privacy-Preserving Data Aggregation with
  Probabilistic Range Validation}. In \bibinfo{booktitle}{\emph{Applied
  Cryptography and Network Security - 19th International Conference, {ACNS}
  2021, Kamakura, Japan, June 21-24, 2021, Proceedings, Part {II}}}
  \emph{(\bibinfo{series}{Lecture Notes in Computer Science},
  Vol.~\bibinfo{volume}{12727})}, \bibfield{editor}{\bibinfo{person}{Kazue
  Sako} {and} \bibinfo{person}{Nils~Ole Tippenhauer}} (Eds.).
  \bibinfo{publisher}{Springer}, \bibinfo{pages}{79--98}.
\newblock
\urldef\tempurl%
\url{https://doi.org/10.1007/978-3-030-78375-4\_4}
\showDOI{\tempurl}


\bibitem[Dekker et~al\mbox{.}(2024)]%
        {Dekker2023}
\bibfield{author}{\bibinfo{person}{Florine~W. Dekker},
  \bibinfo{person}{Zekeriya Erkin}, {and} \bibinfo{person}{Mauro Conti}.}
  \bibinfo{year}{2024}\natexlab{}.
\newblock \bibinfo{booktitle}{\emph{Source code underlying the publication:
  Topology-Based Reconstruction Prevention for Decentralised Learning}}.
\newblock
\urldef\tempurl%
\url{https://doi.org/10.4121/21572601.v2}
\showDOI{\tempurl}


\bibitem[Denning(1980)]%
        {Denning1980}
\bibfield{author}{\bibinfo{person}{Dorothy~E. Denning}.}
  \bibinfo{year}{1980}\natexlab{}.
\newblock \showarticletitle{Secure Statistical Databases with Random Sample
  Queries}.
\newblock \bibinfo{journal}{\emph{{ACM} Trans. Database Syst.}}
  \bibinfo{volume}{5}, \bibinfo{number}{3} (\bibinfo{year}{1980}),
  \bibinfo{pages}{291--315}.
\newblock
\urldef\tempurl%
\url{https://doi.org/10.1145/320613.320616}
\showDOI{\tempurl}


\bibitem[Dolev and Kat(2008)]%
        {Dolev2008}
\bibfield{author}{\bibinfo{person}{Shlomi Dolev} {and}
  \bibinfo{person}{Ronen~I. Kat}.} \bibinfo{year}{2008}\natexlab{}.
\newblock \showarticletitle{HyperTree for self-stabilizing peer-to-peer
  systems}.
\newblock \bibinfo{journal}{\emph{Distributed Comput.}} \bibinfo{volume}{20},
  \bibinfo{number}{5} (\bibinfo{year}{2008}), \bibinfo{pages}{375--388}.
\newblock
\urldef\tempurl%
\url{https://doi.org/10.1007/s00446-007-0038-9}
\showDOI{\tempurl}


\bibitem[Dwork(2006)]%
        {Dwork2006a}
\bibfield{author}{\bibinfo{person}{Cynthia Dwork}.}
  \bibinfo{year}{2006}\natexlab{}.
\newblock \showarticletitle{Differential Privacy}. In
  \bibinfo{booktitle}{\emph{Automata, Languages and Programming, 33rd
  International Colloquium, {ICALP} 2006, Venice, Italy, July 10-14, 2006,
  Proceedings, Part {II}}} \emph{(\bibinfo{series}{Lecture Notes in Computer
  Science}, Vol.~\bibinfo{volume}{4052})},
  \bibfield{editor}{\bibinfo{person}{Michele Bugliesi}, \bibinfo{person}{Bart
  Preneel}, \bibinfo{person}{Vladimiro Sassone}, {and} \bibinfo{person}{Ingo
  Wegener}} (Eds.). \bibinfo{publisher}{Springer}, \bibinfo{pages}{1--12}.
\newblock
\urldef\tempurl%
\url{https://doi.org/10.1007/11787006\_1}
\showDOI{\tempurl}


\bibitem[Dwork and Roth(2014)]%
        {Dwork2014}
\bibfield{author}{\bibinfo{person}{Cynthia Dwork} {and} \bibinfo{person}{Aaron
  Roth}.} \bibinfo{year}{2014}\natexlab{}.
\newblock \showarticletitle{The Algorithmic Foundations of Differential
  Privacy}.
\newblock \bibinfo{journal}{\emph{Found. Trends Theor. Comput. Sci.}}
  \bibinfo{volume}{9}, \bibinfo{number}{3-4} (\bibinfo{year}{2014}),
  \bibinfo{pages}{211--407}.
\newblock
\urldef\tempurl%
\url{https://doi.org/10.1561/0400000042}
\showDOI{\tempurl}


\bibitem[Evfimievski et~al\mbox{.}(2003)]%
        {Evfimievski2003}
\bibfield{author}{\bibinfo{person}{Alexandre~V. Evfimievski},
  \bibinfo{person}{Johannes Gehrke}, {and} \bibinfo{person}{Ramakrishnan
  Srikant}.} \bibinfo{year}{2003}\natexlab{}.
\newblock \showarticletitle{Limiting privacy breaches in privacy preserving
  data mining}. In \bibinfo{booktitle}{\emph{Proceedings of the Twenty-Second
  {ACM} {SIGACT-SIGMOD-SIGART} Symposium on Principles of Database Systems,
  June 9-12, 2003, San Diego, CA, {USA}}},
  \bibfield{editor}{\bibinfo{person}{Frank Neven}, \bibinfo{person}{Catriel
  Beeri}, {and} \bibinfo{person}{Tova Milo}} (Eds.).
  \bibinfo{publisher}{{ACM}}, \bibinfo{pages}{211--222}.
\newblock
\urldef\tempurl%
\url{https://doi.org/10.1145/773153.773174}
\showDOI{\tempurl}


\bibitem[Fellegi(1972)]%
        {Fellegi1972}
\bibfield{author}{\bibinfo{person}{Ivan~P. Fellegi}.}
  \bibinfo{year}{1972}\natexlab{}.
\newblock \showarticletitle{On the question of statistical confidentiality}.
\newblock \bibinfo{journal}{\emph{J. Amer. Statist. Assoc.}}
  \bibinfo{volume}{67}, \bibinfo{number}{337} (\bibinfo{year}{1972}),
  \bibinfo{pages}{7--18}.
\newblock
\urldef\tempurl%
\url{https://doi.org/10.1080/01621459.1972.10481199}
\showDOI{\tempurl}


\bibitem[Garcia and Jacobs(2010)]%
        {Garcia2011}
\bibfield{author}{\bibinfo{person}{Flavio~D. Garcia} {and}
  \bibinfo{person}{Bart Jacobs}.} \bibinfo{year}{2010}\natexlab{}.
\newblock \showarticletitle{Privacy-Friendly Energy-Metering via Homomorphic
  Encryption}. In \bibinfo{booktitle}{\emph{Security and Trust Management - 6th
  International Workshop, {STM} 2010, Athens, Greece, September 23-24, 2010,
  Revised Selected Papers}} \emph{(\bibinfo{series}{Lecture Notes in Computer
  Science}, Vol.~\bibinfo{volume}{6710})},
  \bibfield{editor}{\bibinfo{person}{Jorge Cu{\'{e}}llar},
  \bibinfo{person}{Gilles Barthe}, {and} \bibinfo{person}{Alexander
  Pretschner}} (Eds.). \bibinfo{publisher}{Springer},
  \bibinfo{pages}{226--238}.
\newblock
\urldef\tempurl%
\url{https://doi.org/10.1007/978-3-642-22444-7\_15}
\showDOI{\tempurl}


\bibitem[Guo et~al\mbox{.}(2022)]%
        {Guo2022}
\bibfield{author}{\bibinfo{person}{Shangwei Guo}, \bibinfo{person}{Tianwei
  Zhang}, \bibinfo{person}{Guowen Xu}, \bibinfo{person}{Han Yu},
  \bibinfo{person}{Tao Xiang}, {and} \bibinfo{person}{Yang Liu}.}
  \bibinfo{year}{2022}\natexlab{}.
\newblock \showarticletitle{Topology-Aware Differential Privacy for
  Decentralized Image Classification}.
\newblock \bibinfo{journal}{\emph{{IEEE} Trans. Circuits Syst. Video Technol.}}
  \bibinfo{volume}{32}, \bibinfo{number}{6} (\bibinfo{year}{2022}),
  \bibinfo{pages}{4016--4027}.
\newblock
\urldef\tempurl%
\url{https://doi.org/10.1109/TCSVT.2021.3105723}
\showDOI{\tempurl}


\bibitem[Hitaj et~al\mbox{.}(2017)]%
        {Hitaj2017}
\bibfield{author}{\bibinfo{person}{Briland Hitaj}, \bibinfo{person}{Giuseppe
  Ateniese}, {and} \bibinfo{person}{Fernando P{\'{e}}rez{-}Cruz}.}
  \bibinfo{year}{2017}\natexlab{}.
\newblock \showarticletitle{Deep Models Under the {GAN:} Information Leakage
  from Collaborative Deep Learning}. In \bibinfo{booktitle}{\emph{Proceedings
  of the 2017 {ACM} {SIGSAC} Conference on Computer and Communications
  Security, {CCS} 2017, Dallas, TX, USA, October 30 - November 03, 2017}},
  \bibfield{editor}{\bibinfo{person}{Bhavani Thuraisingham},
  \bibinfo{person}{David Evans}, \bibinfo{person}{Tal Malkin}, {and}
  \bibinfo{person}{Dongyan Xu}} (Eds.). \bibinfo{publisher}{{ACM}},
  \bibinfo{pages}{603--618}.
\newblock
\urldef\tempurl%
\url{https://doi.org/10.1145/3133956.3134012}
\showDOI{\tempurl}


\bibitem[Jebali et~al\mbox{.}(2019)]%
        {Jebali2020}
\bibfield{author}{\bibinfo{person}{Adel Jebali}, \bibinfo{person}{Salma Sassi},
  {and} \bibinfo{person}{Abderrazak Jemai}.} \bibinfo{year}{2019}\natexlab{}.
\newblock \showarticletitle{Inference Control in Distributed Environment: {A}
  Comparison Study}. In \bibinfo{booktitle}{\emph{Risks and Security of
  Internet and Systems, 14th International Conference, CRiSIS 2019, Hammamet,
  Tunisia, October 29-31, 2019, Proceedings}} \emph{(\bibinfo{series}{Lecture
  Notes in Computer Science}, Vol.~\bibinfo{volume}{12026})},
  \bibfield{editor}{\bibinfo{person}{Slim Kallel},
  \bibinfo{person}{Fr{\'{e}}d{\'{e}}ric Cuppens}, \bibinfo{person}{Nora
  Cuppens{-}Boulahia}, {and} \bibinfo{person}{Ahmed~Hadj Kacem}} (Eds.).
  \bibinfo{publisher}{Springer}, \bibinfo{pages}{69--83}.
\newblock
\urldef\tempurl%
\url{https://doi.org/10.1007/978-3-030-41568-6\_5}
\showDOI{\tempurl}


\bibitem[Kairouz et~al\mbox{.}(2021)]%
        {Kairouz2019}
\bibfield{author}{\bibinfo{person}{Peter Kairouz}, \bibinfo{person}{H.~Brendan
  McMahan}, \bibinfo{person}{Brendan Avent}, \bibinfo{person}{Aur{\'{e}}lien
  Bellet}, \bibinfo{person}{Mehdi Bennis}, \bibinfo{person}{Arjun~Nitin
  Bhagoji}, \bibinfo{person}{Kallista~A. Bonawitz}, \bibinfo{person}{Zachary
  Charles}, \bibinfo{person}{Graham Cormode}, \bibinfo{person}{Rachel
  Cummings}, \bibinfo{person}{Rafael G.~L. D'Oliveira}, \bibinfo{person}{Hubert
  Eichner}, \bibinfo{person}{Salim~El Rouayheb}, \bibinfo{person}{David Evans},
  \bibinfo{person}{Josh Gardner}, \bibinfo{person}{Zachary Garrett},
  \bibinfo{person}{Adri{\`{a}} Gasc{\'{o}}n}, \bibinfo{person}{Badih Ghazi},
  \bibinfo{person}{Phillip~B. Gibbons}, \bibinfo{person}{Marco Gruteser},
  \bibinfo{person}{Za{\"{\i}}d Harchaoui}, \bibinfo{person}{Chaoyang He},
  \bibinfo{person}{Lie He}, \bibinfo{person}{Zhouyuan Huo},
  \bibinfo{person}{Ben Hutchinson}, \bibinfo{person}{Justin Hsu},
  \bibinfo{person}{Martin Jaggi}, \bibinfo{person}{Tara Javidi},
  \bibinfo{person}{Gauri Joshi}, \bibinfo{person}{Mikhail Khodak},
  \bibinfo{person}{Jakub Kone{\v{c}}n{\'y}}, \bibinfo{person}{Aleksandra
  Korolova}, \bibinfo{person}{Farinaz Koushanfar}, \bibinfo{person}{Sanmi
  Koyejo}, \bibinfo{person}{Tancr{\`{e}}de Lepoint}, \bibinfo{person}{Yang
  Liu}, \bibinfo{person}{Prateek Mittal}, \bibinfo{person}{Mehryar Mohri},
  \bibinfo{person}{Richard Nock}, \bibinfo{person}{Ayfer {\"{O}}zg{\"{u}}r},
  \bibinfo{person}{Rasmus Pagh}, \bibinfo{person}{Hang Qi},
  \bibinfo{person}{Daniel Ramage}, \bibinfo{person}{Ramesh Raskar},
  \bibinfo{person}{Mariana Raykova}, \bibinfo{person}{Dawn Song},
  \bibinfo{person}{Weikang Song}, \bibinfo{person}{Sebastian~U. Stich},
  \bibinfo{person}{Ziteng Sun}, \bibinfo{person}{Ananda~Theertha Suresh},
  \bibinfo{person}{Florian Tram{\`{e}}r}, \bibinfo{person}{Praneeth Vepakomma},
  \bibinfo{person}{Jianyu Wang}, \bibinfo{person}{Li Xiong},
  \bibinfo{person}{Zheng Xu}, \bibinfo{person}{Qiang Yang},
  \bibinfo{person}{Felix~X. Yu}, \bibinfo{person}{Han Yu}, {and}
  \bibinfo{person}{Sen Zhao}.} \bibinfo{year}{2021}\natexlab{}.
\newblock \showarticletitle{Advances and Open Problems in Federated Learning}.
\newblock \bibinfo{journal}{\emph{Found. Trends Mach. Learn.}}
  \bibinfo{volume}{14}, \bibinfo{number}{1-2} (\bibinfo{year}{2021}),
  \bibinfo{pages}{1--210}.
\newblock
\urldef\tempurl%
\url{https://doi.org/10.1561/2200000083}
\showDOI{\tempurl}


\bibitem[Kanagavelu et~al\mbox{.}(2020)]%
        {Kanagavelu2020}
\bibfield{author}{\bibinfo{person}{Renuga Kanagavelu},
  \bibinfo{person}{Zengxiang Li}, \bibinfo{person}{Juniarto Samsudin},
  \bibinfo{person}{Yechao Yang}, \bibinfo{person}{Feng Yang},
  \bibinfo{person}{Rick Siow~Mong Goh}, \bibinfo{person}{Mervyn Cheah},
  \bibinfo{person}{Praewpiraya Wiwatphonthana}, \bibinfo{person}{Khajonpong
  Akkarajitsakul}, {and} \bibinfo{person}{Shangguang Wang}.}
  \bibinfo{year}{2020}\natexlab{}.
\newblock \showarticletitle{Two-Phase Multi-Party Computation Enabled
  Privacy-Preserving Federated Learning}. In \bibinfo{booktitle}{\emph{20th
  {IEEE/ACM} International Symposium on Cluster, Cloud and Internet Computing,
  {CCGRID} 2020, Melbourne, Australia, May 11-14, 2020}}.
  \bibinfo{publisher}{{IEEE}}, \bibinfo{pages}{410--419}.
\newblock
\urldef\tempurl%
\url{https://doi.org/10.1109/CCGrid49817.2020.00-52}
\showDOI{\tempurl}


\bibitem[Kasiviswanathan et~al\mbox{.}(2008)]%
        {Kasiviswanathan2008}
\bibfield{author}{\bibinfo{person}{Shiva~Prasad Kasiviswanathan},
  \bibinfo{person}{Homin~K. Lee}, \bibinfo{person}{Kobbi Nissim},
  \bibinfo{person}{Sofya Raskhodnikova}, {and} \bibinfo{person}{Adam~D.
  Smith}.} \bibinfo{year}{2008}\natexlab{}.
\newblock \showarticletitle{What Can We Learn Privately?}. In
  \bibinfo{booktitle}{\emph{49th Annual {IEEE} Symposium on Foundations of
  Computer Science, {FOCS} 2008, October 25-28, 2008, Philadelphia, PA,
  {USA}}}. \bibinfo{publisher}{{IEEE} Computer Society},
  \bibinfo{pages}{531--540}.
\newblock
\urldef\tempurl%
\url{https://doi.org/10.1109/FOCS.2008.27}
\showDOI{\tempurl}


\bibitem[Lazebnik and Ustimenko(1995)]%
        {Lazebnik1995}
\bibfield{author}{\bibinfo{person}{Felix Lazebnik} {and}
  \bibinfo{person}{Vasiliy~A. Ustimenko}.} \bibinfo{year}{1995}\natexlab{}.
\newblock \showarticletitle{Explicit Construction of Graphs with an Arbitrary
  Large Girth and of Large Size}.
\newblock \bibinfo{journal}{\emph{Discret. Appl. Math.}} \bibinfo{volume}{60},
  \bibinfo{number}{1-3} (\bibinfo{year}{1995}), \bibinfo{pages}{275--284}.
\newblock
\urldef\tempurl%
\url{https://doi.org/10.1016/0166-218X(94)00058-L}
\showDOI{\tempurl}


\bibitem[Lian et~al\mbox{.}(2017)]%
        {Lian2017}
\bibfield{author}{\bibinfo{person}{Xiangru Lian}, \bibinfo{person}{Ce Zhang},
  \bibinfo{person}{Huan Zhang}, \bibinfo{person}{Cho{-}Jui Hsieh},
  \bibinfo{person}{Wei Zhang}, {and} \bibinfo{person}{Ji Liu}.}
  \bibinfo{year}{2017}\natexlab{}.
\newblock \showarticletitle{Can Decentralized Algorithms Outperform Centralized
  Algorithms? {A} Case Study for Decentralized Parallel Stochastic Gradient
  Descent}. In \bibinfo{booktitle}{\emph{Advances in Neural Information
  Processing Systems 30: Annual Conference on Neural Information Processing
  Systems 2017, December 4-9, 2017, Long Beach, CA, {USA}}},
  \bibfield{editor}{\bibinfo{person}{Isabelle Guyon}, \bibinfo{person}{Ulrike
  von Luxburg}, \bibinfo{person}{Samy Bengio}, \bibinfo{person}{Hanna~M.
  Wallach}, \bibinfo{person}{Rob Fergus}, \bibinfo{person}{S.~V.~N.
  Vishwanathan}, {and} \bibinfo{person}{Roman Garnett}} (Eds.).
  \bibinfo{pages}{5330--5340}.
\newblock
\urldef\tempurl%
\url{https://proceedings.neurips.cc/paper/2017/hash/f75526659f31040afeb61cb7133e4e6d-Abstract.html}
\showURL{%
\tempurl}


\bibitem[Lindell(2003)]%
        {Lindell2003}
\bibfield{author}{\bibinfo{person}{Yehuda Lindell}.}
  \bibinfo{year}{2003}\natexlab{}.
\newblock \bibinfo{booktitle}{\emph{Composition of Secure Multi-Party
  Protocols, {A} Comprehensive Study}}. \bibinfo{series}{Lecture Notes in
  Computer Science}, Vol.~\bibinfo{volume}{2815}.
\newblock \bibinfo{publisher}{Springer}.
\newblock
\showISBNx{3-540-20105-X}
\urldef\tempurl%
\url{https://doi.org/10.1007/b13246}
\showDOI{\tempurl}


\bibitem[Maurer(2011)]%
        {Maurer2012}
\bibfield{author}{\bibinfo{person}{Ueli Maurer}.}
  \bibinfo{year}{2011}\natexlab{}.
\newblock \showarticletitle{Constructive Cryptography - {A} New Paradigm for
  Security Definitions and Proofs}. In \bibinfo{booktitle}{\emph{Theory of
  Security and Applications - Joint Workshop, {TOSCA} 2011, Saarbr{\"{u}}cken,
  Germany, March 31 - April 1, 2011, Revised Selected Papers}}
  \emph{(\bibinfo{series}{Lecture Notes in Computer Science},
  Vol.~\bibinfo{volume}{6993})}, \bibfield{editor}{\bibinfo{person}{Sebastian
  M{\"{o}}dersheim} {and} \bibinfo{person}{Catuscia Palamidessi}} (Eds.).
  \bibinfo{publisher}{Springer}, \bibinfo{pages}{33--56}.
\newblock
\urldef\tempurl%
\url{https://doi.org/10.1007/978-3-642-27375-9\_3}
\showDOI{\tempurl}


\bibitem[McMahan et~al\mbox{.}(2017)]%
        {McMahan2017}
\bibfield{author}{\bibinfo{person}{Brendan McMahan}, \bibinfo{person}{Eider
  Moore}, \bibinfo{person}{Daniel Ramage}, \bibinfo{person}{Seth Hampson},
  {and} \bibinfo{person}{Blaise~Ag{\"{u}}era y Arcas}.}
  \bibinfo{year}{2017}\natexlab{}.
\newblock \showarticletitle{Communication-Efficient Learning of Deep Networks
  from Decentralized Data}. In \bibinfo{booktitle}{\emph{Proceedings of the
  20th International Conference on Artificial Intelligence and Statistics,
  {AISTATS} 2017, 20-22 April 2017, Fort Lauderdale, FL, {USA}}}
  \emph{(\bibinfo{series}{Proceedings of Machine Learning Research},
  Vol.~\bibinfo{volume}{54})}, \bibfield{editor}{\bibinfo{person}{Aarti Singh}
  {and} \bibinfo{person}{Xiaojin~(Jerry) Zhu}} (Eds.).
  \bibinfo{publisher}{{PMLR}}, \bibinfo{pages}{1273--1282}.
\newblock
\urldef\tempurl%
\url{http://proceedings.mlr.press/v54/mcmahan17a.html}
\showURL{%
\tempurl}


\bibitem[Oliva et~al\mbox{.}(2018)]%
        {Oliva2018}
\bibfield{author}{\bibinfo{person}{Gabriele Oliva}, \bibinfo{person}{Roberto
  Setola}, \bibinfo{person}{Luigi Glielmo}, {and}
  \bibinfo{person}{Christoforos~N. Hadjicostis}.}
  \bibinfo{year}{2018}\natexlab{}.
\newblock \showarticletitle{Distributed Cycle Detection and Removal}.
\newblock \bibinfo{journal}{\emph{{IEEE} Trans. Control. Netw. Syst.}}
  \bibinfo{volume}{5}, \bibinfo{number}{1} (\bibinfo{year}{2018}),
  \bibinfo{pages}{194--204}.
\newblock
\urldef\tempurl%
\url{https://doi.org/10.1109/TCNS.2016.2593264}
\showDOI{\tempurl}


\bibitem[Qu et~al\mbox{.}(2020)]%
        {Qu2020}
\bibfield{author}{\bibinfo{person}{Youyang Qu}, \bibinfo{person}{Longxiang
  Gao}, \bibinfo{person}{Tom~H. Luan}, \bibinfo{person}{Yong Xiang},
  \bibinfo{person}{Shui Yu}, \bibinfo{person}{Bai Li}, {and}
  \bibinfo{person}{Gavin Zheng}.} \bibinfo{year}{2020}\natexlab{}.
\newblock \showarticletitle{Decentralized Privacy Using Blockchain-Enabled
  Federated Learning in Fog Computing}.
\newblock \bibinfo{journal}{\emph{{IEEE} Internet Things J.}}
  \bibinfo{volume}{7}, \bibinfo{number}{6} (\bibinfo{year}{2020}),
  \bibinfo{pages}{5171--5183}.
\newblock
\urldef\tempurl%
\url{https://doi.org/10.1109/JIOT.2020.2977383}
\showDOI{\tempurl}


\bibitem[Rieck et~al\mbox{.}(2011)]%
        {Rieck2011}
\bibfield{author}{\bibinfo{person}{Konrad Rieck}, \bibinfo{person}{Philipp
  Trinius}, \bibinfo{person}{Carsten Willems}, {and} \bibinfo{person}{Thorsten
  Holz}.} \bibinfo{year}{2011}\natexlab{}.
\newblock \showarticletitle{Automatic analysis of malware behavior using
  machine learning}.
\newblock \bibinfo{journal}{\emph{J. Comput. Secur.}} \bibinfo{volume}{19},
  \bibinfo{number}{4} (\bibinfo{year}{2011}), \bibinfo{pages}{639--668}.
\newblock
\urldef\tempurl%
\url{https://doi.org/10.3233/JCS-2010-0410}
\showDOI{\tempurl}


\bibitem[Schmid et~al\mbox{.}(2020)]%
        {Schmid2020}
\bibfield{author}{\bibinfo{person}{Robert Schmid}, \bibinfo{person}{Bjarne
  Pfitzner}, \bibinfo{person}{Jossekin Beilharz}, \bibinfo{person}{Bert
  Arnrich}, {and} \bibinfo{person}{Andreas Polze}.}
  \bibinfo{year}{2020}\natexlab{}.
\newblock \showarticletitle{Tangle Ledger for Decentralized Learning}. In
  \bibinfo{booktitle}{\emph{2020 {IEEE} International Parallel and Distributed
  Processing Symposium Workshops, {IPDPSW} 2020, New Orleans, LA, USA, May
  18-22, 2020}}. \bibinfo{publisher}{{IEEE}}, \bibinfo{pages}{852--859}.
\newblock
\urldef\tempurl%
\url{https://doi.org/10.1109/IPDPSW50202.2020.00144}
\showDOI{\tempurl}


\bibitem[Tang et~al\mbox{.}(2018)]%
        {Tang2018}
\bibfield{author}{\bibinfo{person}{Hanlin Tang}, \bibinfo{person}{Xiangru
  Lian}, \bibinfo{person}{Ming Yan}, \bibinfo{person}{Ce Zhang}, {and}
  \bibinfo{person}{Ji Liu}.} \bibinfo{year}{2018}\natexlab{}.
\newblock \showarticletitle{D\({}^{\mbox{2}}\): Decentralized Training over
  Decentralized Data}. In \bibinfo{booktitle}{\emph{Proceedings of the 35th
  International Conference on Machine Learning, {ICML} 2018,
  Stockholmsm{\"{a}}ssan, Stockholm, Sweden, July 10-15, 2018}}
  \emph{(\bibinfo{series}{Proceedings of Machine Learning Research},
  Vol.~\bibinfo{volume}{80})}, \bibfield{editor}{\bibinfo{person}{Jennifer~G.
  Dy} {and} \bibinfo{person}{Andreas Krause}} (Eds.).
  \bibinfo{publisher}{{PMLR}}, \bibinfo{pages}{4855--4863}.
\newblock
\urldef\tempurl%
\url{http://proceedings.mlr.press/v80/tang18a.html}
\showURL{%
\tempurl}


\bibitem[Tran et~al\mbox{.}(2021)]%
        {Tran2021}
\bibfield{author}{\bibinfo{person}{Anh{-}Tu Tran}, \bibinfo{person}{The{-}Dung
  Luong}, \bibinfo{person}{Jessada Karnjana}, {and} \bibinfo{person}{Van{-}Nam
  Huynh}.} \bibinfo{year}{2021}\natexlab{}.
\newblock \showarticletitle{An efficient approach for privacy preserving
  decentralized deep learning models based on secure multi-party computation}.
\newblock \bibinfo{journal}{\emph{Neurocomputing}}  \bibinfo{volume}{422}
  (\bibinfo{year}{2021}), \bibinfo{pages}{245--262}.
\newblock
\urldef\tempurl%
\url{https://doi.org/10.1016/j.neucom.2020.10.014}
\showDOI{\tempurl}


\bibitem[Troncoso et~al\mbox{.}(2017)]%
        {Troncoso2017}
\bibfield{author}{\bibinfo{person}{Carmela Troncoso}, \bibinfo{person}{Marios
  Isaakidis}, \bibinfo{person}{George Danezis}, {and} \bibinfo{person}{Harry
  Halpin}.} \bibinfo{year}{2017}\natexlab{}.
\newblock \showarticletitle{Systematizing Decentralization and Privacy: Lessons
  from 15 Years of Research and Deployments}.
\newblock \bibinfo{journal}{\emph{Proc. Priv. Enhancing Technol.}}
  \bibinfo{volume}{2017}, \bibinfo{number}{4} (\bibinfo{year}{2017}),
  \bibinfo{pages}{404--426}.
\newblock
\urldef\tempurl%
\url{https://doi.org/10.1515/popets-2017-0056}
\showDOI{\tempurl}


\bibitem[Vanhaesebrouck et~al\mbox{.}(2017)]%
        {Vanhaesebrouck2017}
\bibfield{author}{\bibinfo{person}{Paul Vanhaesebrouck},
  \bibinfo{person}{Aur{\'{e}}lien Bellet}, {and} \bibinfo{person}{Marc
  Tommasi}.} \bibinfo{year}{2017}\natexlab{}.
\newblock \showarticletitle{Decentralized Collaborative Learning of
  Personalized Models over Networks}. In \bibinfo{booktitle}{\emph{Proceedings
  of the 20th International Conference on Artificial Intelligence and
  Statistics, {AISTATS} 2017, 20-22 April 2017, Fort Lauderdale, FL, {USA}}}
  \emph{(\bibinfo{series}{Proceedings of Machine Learning Research},
  Vol.~\bibinfo{volume}{54})}, \bibfield{editor}{\bibinfo{person}{Aarti Singh}
  {and} \bibinfo{person}{Xiaojin~(Jerry) Zhu}} (Eds.).
  \bibinfo{publisher}{{PMLR}}, \bibinfo{pages}{509--517}.
\newblock
\urldef\tempurl%
\url{http://proceedings.mlr.press/v54/vanhaesebrouck17a.html}
\showURL{%
\tempurl}


\bibitem[Wang et~al\mbox{.}(2002)]%
        {Wang2002}
\bibfield{author}{\bibinfo{person}{Lingyu Wang}, \bibinfo{person}{Duminda
  Wijesekera}, {and} \bibinfo{person}{Sushil Jajodia}.}
  \bibinfo{year}{2002}\natexlab{}.
\newblock \showarticletitle{Cardinality-Based Inference Control in Sum-Only
  Data Cubes}. In \bibinfo{booktitle}{\emph{Computer Security - {ESORICS} 2002,
  7th European Symposium on Research in Computer Security, Zurich, Switzerland,
  October 14-16, 2002, Proceedings}} \emph{(\bibinfo{series}{Lecture Notes in
  Computer Science}, Vol.~\bibinfo{volume}{2502})},
  \bibfield{editor}{\bibinfo{person}{Dieter Gollmann},
  \bibinfo{person}{G{\"{u}}nter Karjoth}, {and} \bibinfo{person}{Michael
  Waidner}} (Eds.). \bibinfo{publisher}{Springer}, \bibinfo{pages}{55--71}.
\newblock
\urldef\tempurl%
\url{https://doi.org/10.1007/3-540-45853-0\_4}
\showDOI{\tempurl}


\bibitem[Wang et~al\mbox{.}(2019)]%
        {Wang2019}
\bibfield{author}{\bibinfo{person}{Zhibo Wang}, \bibinfo{person}{Mengkai Song},
  \bibinfo{person}{Zhifei Zhang}, \bibinfo{person}{Yang Song},
  \bibinfo{person}{Qian Wang}, {and} \bibinfo{person}{Hairong Qi}.}
  \bibinfo{year}{2019}\natexlab{}.
\newblock \showarticletitle{Beyond Inferring Class Representatives: User-Level
  Privacy Leakage From Federated Learning}. In \bibinfo{booktitle}{\emph{2019
  {IEEE} Conference on Computer Communications, {INFOCOM} 2019, Paris, France,
  April 29 - May 2, 2019}}. \bibinfo{publisher}{{IEEE}},
  \bibinfo{pages}{2512--2520}.
\newblock
\urldef\tempurl%
\url{https://doi.org/10.1109/INFOCOM.2019.8737416}
\showDOI{\tempurl}


\bibitem[Warner(1965)]%
        {Warner1965}
\bibfield{author}{\bibinfo{person}{Stanley~L. Warner}.}
  \bibinfo{year}{1965}\natexlab{}.
\newblock \showarticletitle{Randomized Response: A Survey Technique for
  Eliminating Evasive Answer Bias}.
\newblock \bibinfo{journal}{\emph{J. Amer. Statist. Assoc.}}
  \bibinfo{volume}{60}, \bibinfo{number}{309} (\bibinfo{year}{1965}),
  \bibinfo{pages}{63--69}.
\newblock
\urldef\tempurl%
\url{https://doi.org/10.1080/01621459.1965.10480775}
\showDOI{\tempurl}
\showeprint{https://www.tandfonline.com/doi/pdf/10.1080/01621459.1965.10480775}


\bibitem[Weiss et~al\mbox{.}(2016)]%
        {Weiss2016}
\bibfield{author}{\bibinfo{person}{Gary~M. Weiss}, \bibinfo{person}{Jessica~L.
  Timko}, \bibinfo{person}{Catherine~M. Gallagher}, \bibinfo{person}{Kenichi
  Yoneda}, {and} \bibinfo{person}{Andrew~J. Schreiber}.}
  \bibinfo{year}{2016}\natexlab{}.
\newblock \showarticletitle{Smartwatch-based activity recognition: {A} machine
  learning approach}. In \bibinfo{booktitle}{\emph{2016 {IEEE-EMBS}
  International Conference on Biomedical and Health Informatics, {BHI} 2016,
  Las Vegas, NV, USA, February 24-27, 2016}}. \bibinfo{publisher}{{IEEE}},
  \bibinfo{pages}{426--429}.
\newblock
\urldef\tempurl%
\url{https://doi.org/10.1109/BHI.2016.7455925}
\showDOI{\tempurl}


\bibitem[Xiao and Boyd(2004)]%
        {Boyd2004}
\bibfield{author}{\bibinfo{person}{Lin Xiao} {and} \bibinfo{person}{Stephen~P.
  Boyd}.} \bibinfo{year}{2004}\natexlab{}.
\newblock \showarticletitle{Fast linear iterations for distributed averaging}.
\newblock \bibinfo{journal}{\emph{Syst. Control. Lett.}} \bibinfo{volume}{53},
  \bibinfo{number}{1} (\bibinfo{year}{2004}), \bibinfo{pages}{65--78}.
\newblock
\urldef\tempurl%
\url{https://doi.org/10.1016/J.SYSCONLE.2004.02.022}
\showDOI{\tempurl}


\bibitem[Yang et~al\mbox{.}(2010)]%
        {Yang2010}
\bibfield{author}{\bibinfo{person}{Bin Yang}, \bibinfo{person}{Hiroshi
  Nakagawa}, \bibinfo{person}{Issei Sato}, {and} \bibinfo{person}{Jun Sakuma}.}
  \bibinfo{year}{2010}\natexlab{}.
\newblock \showarticletitle{Collusion-resistant privacy-preserving data
  mining}. In \bibinfo{booktitle}{\emph{Proceedings of the 16th {ACM} {SIGKDD}
  International Conference on Knowledge Discovery and Data Mining, Washington,
  DC, USA, July 25-28, 2010}}, \bibfield{editor}{\bibinfo{person}{Bharat Rao},
  \bibinfo{person}{Balaji Krishnapuram}, \bibinfo{person}{Andrew Tomkins},
  {and} \bibinfo{person}{Qiang Yang}} (Eds.). \bibinfo{publisher}{{ACM}},
  \bibinfo{pages}{483--492}.
\newblock
\urldef\tempurl%
\url{https://doi.org/10.1145/1835804.1835867}
\showDOI{\tempurl}


\bibitem[Zantedeschi et~al\mbox{.}(2020)]%
        {Zantedeschi2020}
\bibfield{author}{\bibinfo{person}{Valentina Zantedeschi},
  \bibinfo{person}{Aur{\'{e}}lien Bellet}, {and} \bibinfo{person}{Marc
  Tommasi}.} \bibinfo{year}{2020}\natexlab{}.
\newblock \showarticletitle{Fully Decentralized Joint Learning of Personalized
  Models and Collaboration Graphs}. In \bibinfo{booktitle}{\emph{The 23rd
  International Conference on Artificial Intelligence and Statistics, {AISTATS}
  2020, 26-28 August 2020, Online [Palermo, Sicily, Italy]}}
  \emph{(\bibinfo{series}{Proceedings of Machine Learning Research},
  Vol.~\bibinfo{volume}{108})}, \bibfield{editor}{\bibinfo{person}{Silvia
  Chiappa} {and} \bibinfo{person}{Roberto Calandra}} (Eds.).
  \bibinfo{publisher}{{PMLR}}, \bibinfo{pages}{864--874}.
\newblock
\urldef\tempurl%
\url{http://proceedings.mlr.press/v108/zantedeschi20a.html}
\showURL{%
\tempurl}


\bibitem[Zheng et~al\mbox{.}(2017)]%
        {Zheng2017}
\bibfield{author}{\bibinfo{person}{Kai Zheng}, \bibinfo{person}{Wenlong Mou},
  {and} \bibinfo{person}{Liwei Wang}.} \bibinfo{year}{2017}\natexlab{}.
\newblock \showarticletitle{Collect at Once, Use Effectively: Making
  Non-interactive Locally Private Learning Possible}. In
  \bibinfo{booktitle}{\emph{Proceedings of the 34th International Conference on
  Machine Learning, {ICML} 2017, Sydney, NSW, Australia, 6-11 August 2017}}
  \emph{(\bibinfo{series}{Proceedings of Machine Learning Research},
  Vol.~\bibinfo{volume}{70})}, \bibfield{editor}{\bibinfo{person}{Doina Precup}
  {and} \bibinfo{person}{Yee~Whye Teh}} (Eds.). \bibinfo{publisher}{{PMLR}},
  \bibinfo{pages}{4130--4139}.
\newblock
\urldef\tempurl%
\url{http://proceedings.mlr.press/v70/zheng17c.html}
\showURL{%
\tempurl}


\end{thebibliography}
\end{document}